\providecommand{\U}[1]{\protect\rule{.1in}{.1in}}
\newtheorem{theorem}{Theorem}[section]
\newtheorem{corollary}[theorem]{Corollary}
\newtheorem{lemma}[theorem]{Lemma}
\begin{document}

\title{Some Remarks on Cops and Drunk Robbers}

\author{Athanasios Kehagias}
\address{Department of Mathematics, Physics and Computer Sciences, Aristotle University of Thessaloniki, Thessaloniki GR54124, Greece}
\email{\texttt{kehagiat@auth.gr}}

\author{Pawe{\l} Pra{\l}at}
\address{Department of Mathematics, Ryerson University, Toronto, ON, Canada, M5B 2K3}
\email{\texttt{pralat@ryerson.ca}}

\maketitle

\begin{abstract}
The cops and robbers game has been extensively studied under the assumption of optimal play by both the cops and the robbers. In this paper we study the problem in which cops are chasing a \emph{drunk} robber (that is, a robber who performs a random walk) on a graph. Our main goal is to characterize the ``cost of drunkenness.'' Specifically, we study the ratio of expected capture times for the optimal version and the drunk robber one. We also examine the algorithmic side of the problem; that is, how to compute near-optimal search schedules for the cops. Finally, we present a preliminary investigation of the \emph{invisible} robber game and point out differences between this game and \emph{graph search}.
\end{abstract}

\renewcommand{\labelenumi}{(\roman{enumi})} \makeatletter
\renewcommand{\@seccntformat}[1]{\@nameuse{the#1}.\quad} \makeatother

\section{Introduction}\label{sec01}

The game of \emph{Cops and Robbers}, introduced independently by Nowakowski and Winkler~\cite{nw} and Quilliot~\cite{q} almost thirty years ago, is played on a fixed undirected, simple, and finite graph $G$. There are two players, a team of $k$ \emph{cops}, where $k\geq1$ is a fixed integer, and the \emph{robber}. In the first round of the game, the cops occupy any set of $k$ vertices and then  the robber chooses a vertex to start from; in the following rounds, first the cops and then the robber move from vertex to vertex, following the edges of $G$. More than one cop is allowed to occupy a vertex, and the players may remain on their current positions. At every step of the game, both players know the positions of all cops and the robber. The cops win if they capture the robber; that is, if at least one of cop eventually occupies the same vertex as the robber; the robber wins if he can avoid being captured indefinitely. The players are \emph{adversarial}; that is, they play optimally against each other. Since placing a cop on each vertex guarantees that the cops win, we may define the \emph{cop number}, written $c(G)$, to be the minimum number of cops needed to win on $G$. The cop number was introduced by Aigner and Fromme in~\cite{af}.

In this paper we study a new version of the game, in which the robber is \emph{drunk}; that is, he performs a random walk on $G$. The cops are assumed to follow a strategy which is optimal with respect to the robber's random behavior. This version was proposed by D.\ Thilikos during the 4th Workshop on GRAph Searching, Theory and Applications (GRASTA 2011) and he specifically asked the following question: ``what is the \emph{cost of drunkenness}?'' In other words, how much faster than the adversarial robber is the drunk one captured? We try to answer various versions of this question. In addition, we study some algorithmic questions; for example, how to compute the expected capture time for an optimal strategy of cops.

There is a large bibliography on pursuit games on graphs. The reader interested in cops and robbers can start by perusing the surveys~\cite{al,ft,h} and the recent book \cite{NowBook}. To the best of our knowledge, the problem of a drunk robber has not been previously studied in the cops and robbers literature. However there is a strong connection to the \emph{Markov Decision Processes} (MDP) literature; we will comment on this connection (and use it) in Section~\ref{sec06}.  The reader can refer to~\cite{MDPOnline,MDP03,MDP02} for MDP surveys.

While the emphasis of the current paper is on cops chasing the visible robber, we also touch briefly the case of \emph{invisible} robber, both adversarial and drunk. Not much has been written on this problem, but a related problem which has been extensively studied is the \emph{Graph Search} problem, where a team of searchers try to locate in a graph an \emph{invisible} fugitive, who is also assumed to be \emph{arbitrarily fast} and \emph{omniscient} (he always knows the searchers' locations as well as their strategy). A recent comprehensive review of graph search appears in~\cite{ft}.  We emphasize that the graph search problem is similar but \emph{not} identical to cops chasing an invisible robber.

The paper is structured as follows. In Section~\ref{sec02} we present definitions and our notation; the formulation is, naturally, probabilistic. In particular, we define the cost of drunkenness to be the ratio of the capture time for the adversarial robber and the \emph{expected} capture time for the drunk robber. We also present a number of lemmas which we will repeatedly use in the following sections. In Section~\ref{sec03} we obtain bounds on the cost of drunkenness for various special families of graphs; for example, paths, cycles, grids, and complete $d$-ary trees. In Section~\ref{sec04} we look at the problem more generally and show that, for any $c\in[1,\infty)$, there is a graph for which the cost of drunkenness is arbitrarily close to $c$. In Section~\ref{sec06} we connect the cops and drunk robber problem to \emph{Markov Decision Processes} (MDP); that is, Markov chains with a \emph{control input} which can modify the transition probabilities. MDP's provide a natural language for the problem; in particular they are useful in the computation of optimal cop strategies; that is, strategies which minimize the expected robber capture time. We then use the MDP machinery to present algorithms which compute the optimal cop strategy for a given graph and a drunk robber. In Section~\ref{sec07} we give a brief, preliminary discussion of the cost of drunkenness for an \emph{invisible} robber. Finally, in Section~\ref{sec08} we list possible future research directions.

\section{Preliminaries}\label{sec02}

\subsection{Definitions}

Let $G=(V,E)$ be a fixed undirected, simple, and finite graph. Since the game
played on a disconnected graph can be analyzed by investigating each component
separately, we assume that $G$ is connected. We will use the following
notation and assumptions.

\begin{enumerate}

\item There are $k$ cops (for the time being we assume $k\geq c(G)$ but this assumption will be relaxed in later sections).

\item $X_{t}^{i}$ denotes the position of the $i$-th cop at time $t$ ($i \in\{1,2,\ldots,k\}$, $t \in\{0,1,2, \ldots\}$); $X_{t}=(X_{t}^{1},\ldots,X_{t}^{k})$ denotes the vector of all cop positions at time $t$; $\mathbf{X}=( X_{0},X_{1},X_{2},\ldots)$ denotes the positions of all cops during the game ($\mathbf{X}$ may have finite or infinite length).

\item $Y_{t}$ denotes the position of the robber at time $t$ and $\mathbf{Y}=(Y_{0},Y_{1},Y_{2}, \ldots)$ the positions of the robber during the game. (Let us note that there is a correlation between $\mathbf{X}$ and $\mathbf{Y}$; that is, players adjust their strategies observing moves of the opponent.)

\item The moving sequence is as follows: first the cops choose initial positions $X_{0} \in V$, then the robber chooses $Y_{0} \in V$. For $t\in\{1,2,\ldots\}$ first the cops choose $X_{t}$ and then the robber chooses $Y_{t}$. Players use edges of the graph $G$ to move from vertex to another one;  that is, $\{X_t^i, X_{t+1}^i\} \in E$ for $i \in\{1,2,\ldots,k\}$ and $t \in\{0,1,2, \ldots\}$, and $\{Y_t, Y_{t+1}\} \in E$ for $t \in\{0,1,2, \ldots\}$.
 
\item The \emph{capture time }is denoted by $T$ and defined as follows
\[
T=\min\{ t : \exists i \text{ such that }X_{t}^{i}=Y_{t} \};
\]
that is, it is the first time a cop is located at the same vertex as the robber (note that this can happen either after the cops move or after the evader moves). Note that $T<\infty$, since $k\geq c(G)$ and $c(G)$ cops can capture the adversarial robber (and so, of course, the drunk one too).

\end{enumerate}

Assuming for the moment adversarial cops \emph{and} robber, and given initial cop positions $x\in V^{k}$ and robber position $y\in V$, we let $\mathrm{ct}_{x,y}(G,k)=T$. The \emph{$k$-capture time} is defined as follows:
\[
\mathrm{ct}(G,k) =\min_{x\in V^{k}}\max_{y\in V}\mathrm{ct}_{x,y}(G,k).
\]
In other words, we allow our perfect players to choose their initial positions in order to achieve the best outcome. Finally, when $k=c(G)$ we simply write $\mathrm{ct}(G)$ instead of $\mathrm{ct}(G,c(G))$, and call it the \emph{capture time} instead of $c(G)$-capture time. Let us stress one more time that the above quantities are defined under the assumption of \emph{optimal play by both players}.

Next let us assume that the cops are adversarial but the robber is \emph{drunk}. More specifically, we assume the robber performs a random walk on $G$. Given that he is at vertex $v\in V$ at time $t$, he moves to $u\in N(v)$ at time $(t+1)$ with probability equal to $1/|N(v)|$. Note that we do \emph{not} include $v$ in $N(v)$; that is, we consider open, not closed,  neighbourhoods. Moreover, the robber probability distribution does not depend on current position of cops; in particular, it can happen that the robber moves to a vertex occupied by a cop (something the adversarial robber would never do).

Under the above assumptions, the drunk robber game is actually a one-player game and, for given initial configuration and cops strategy, the capture time $T$ is a random variable. For any $x \in V^k$ and $y \in V$, let
\[
\mathrm{dct}_{x,y}\left(  G,k\right)  = \mathbb{E}\left(  T~~|~~X_{0}=x,Y_{0}=y,\text{$k$ cops are used optimally}\right);
\]
in other words, it is the expected capture time given initial cops and robber configurations $x$, $y$ and optimal play by the $k$ cops.

Since the robber is drunk, we cannot expect him to choose the most suitable vertex to start with---instead, he chooses an initial vertex uniformly at random. Cops are, of course, aware of this and so they try to choose an initial configuration so that the expected length of the game is as small as possible. Hence, we define the expected $k$-capture time as follows:
\[
\mathrm{dct}\left(  G,k\right)  =\min_{x\in V^{k}} \sum_{y\in V} \frac{\mathrm{dct}_{x,y}\left(  G,k\right)  }{\left\vert V\right\vert }.
\]
As before, $\mathrm{dct}(G)=\mathrm{dct}(G,c(G))$. We define the \emph{cost of drunkenness} as follows
\[
F(G)=\frac{\mathrm{ct}(G)}{\mathrm{dct}(G)}
\]
and we obviously have $F(G)\geq1$.

While we concentrate on the case $k=c(G)$, it is also natural to consider expected capture time $\mathrm{dct}(G,k)$ for $k\neq c(G)$. The next theorem shows that this is well defined for any $k\geq1$ (in particular, even for $k<c(G)$).

\begin{theorem}\label{thm:dct_finite} 
$\mathrm{dct}(G,k)<\infty$ for any connected graph $G$ and $k\geq1$.
\end{theorem}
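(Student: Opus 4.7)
The plan is to exhibit one legal cop strategy whose expected capture time is finite; since $\mathrm{dct}(G,k)$ is obtained by minimizing over cop strategies and initial placements, this already yields the theorem.

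The strategy I would use is the trivial ``stay put'' strategy: pick any vertex $v_0 \in V$ and place all $k$ cops on $v_0$, keeping them there for the entire game. This is legal because the rules permit a player to remain on its current vertex. Under this strategy the capture time $T$ is bounded above by the first hitting time of the robber's random walk to $v_0$,
\[
T \;\leq\; \tau_{v_0} \;:=\; \min\{t \geq 0 : Y_t = v_0\},
\]
so it suffices to show $\mathbb{E}[\tau_{v_0} \mid Y_0 = y] < \infty$ for every starting vertex $y \in V$; averaging finitely many finite numbers then yields a finite upper bound on $\mathrm{dct}(G,k)$.

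The key ingredient is the classical Markov-chain fact that the simple random walk on a finite connected graph is irreducible on the finite state space $V$ and therefore has finite expected hitting time between any pair of states. The cleanest way I would derive this here is via stochastic domination. Let $n = |V|$, let $\Delta$ be the maximum degree of $G$, and set $p_0 = \Delta^{-(n-1)} > 0$. From any vertex $y$ there is a path in $G$ of length $\ell \leq n-1$ from $y$ to $v_0$, and the probability that the walk follows that particular path in its next $\ell$ steps is at least $\Delta^{-\ell} \geq p_0$. By the Markov property, within each block of $n-1$ consecutive steps the walk hits $v_0$ with conditional probability at least $p_0$, so
\[
\mathbb{P}\bigl(\tau_{v_0} > j(n-1)\bigr) \;\leq\; (1 - p_0)^j,
\]
and consequently $\mathbb{E}[\tau_{v_0}] \leq (n-1)/p_0 < \infty$, uniformly in the starting vertex. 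There is no substantive obstacle in this plan; the only nontrivial input is the finite-mean-hitting-time property, for which the short stochastic-domination argument just sketched suffices.
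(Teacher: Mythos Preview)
Your proof is correct and follows essentially the same approach as the paper: both place all cops on a fixed vertex and never move them, then bound the hitting time of the robber's random walk by a geometric argument using blocks of fixed length and a uniform lower bound $\Delta^{-L}$ on the probability of reaching the target within each block. The only cosmetic difference is that the paper uses the diameter $D$ for the block length (giving the bound $D\Delta^{D}$) while you use $n-1$; your formulation is slightly looser but the argument is identical.
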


\begin{proof}
Let $G=(V,E)$ be any connected graph, $D=D(G)$ be the diameter of $G$, and $\Delta=\Delta(G)$ be the maximum degree of $G$. Fix any vertex $v\in V$, place $k$ cops on $v$, and let $X_{t}^{i}=v$ for all $i$ and $t$ (that is, cops never move; this is clearly a suboptimal strategy). For a given vertex $y\in V$ occupied by the drunk robber, the probability that he uses a shortest path from $y$ to $v$ to move straight to $v$ is at least $(1/\Delta)^{D}$. This implies that, regardless of the current position of the robber at time $t$, the probability that he will be caught after at most $D$ further rounds is at least $\varepsilon= (1/\Delta)^{D}$. Moreover, corresponding events for times $t+iD$, $i\in{\mathbb{N}}\cup\{0\}$ are mutually independent. Thus, we get immediately that
\begin{align*}
\mathbb{E}T  &  =\sum_{t\geq0}\mathbb{P}(T>t)~~\leq~~\sum_{t\geq0} \mathbb{P} \left(  T> \left\lfloor \frac{t}{D} \right\rfloor D \right) \nonumber\\
&  =\sum_{i\geq0} D \cdot\mathbb{P}(T>iD)~~\leq~~D \sum_{i\geq0} (1-\varepsilon)^{i}~~=~~\frac{D}{\varepsilon}~~=~~D \Delta^{D}~~<~~\infty,
\end{align*}
and we are done.
\end{proof}

Let us remark that sharper bounds can be obtained for the capture time of a drunk robber, even in the case that the cops are also drunk; for example see~\cite{Winkler}. However, Theorem \ref{thm:dct_finite} will be sufficient for our needs.

\subsection{Some Useful Lemmas}

We will be using the following version of a well-known Chernoff bound many times so let us state it explicitly.

\begin{lemma}[\cite{JLR}]\label{lem:Chernoff} 
Let $X$ be a random variable that can be expressed as a sum $X=\sum_{i=1}^{n} X_{i}$ of independent random indicator variables where $X_{i}\in\mathrm{Be}(p_{i})$ with (possibly) different $p_{i}=\mathbb{P} (X_{i} = 1)= \mathbb{E}X_{i}$. Then the following holds for $t \ge0$:
\begin{align*}
\mathbb{P} (X \ge\mathbb{E }X + t)  &  \le\exp\left(  - \frac{t^{2}}{2(\mathbb{E }X+t/3)} \right)  ,\\
\mathbb{P} (X \le\mathbb{E }X - t)  &  \le\exp\left(  - \frac{t^{2}}{2\mathbb{E }X} \right)  .
\end{align*}
In particular, if $\varepsilon\le3/2$, then
\begin{align*}
\mathbb{P} (|X - \mathbb{E }X| \ge\varepsilon\mathbb{E }X)  &  \le2 \exp\left(  - \frac{\varepsilon^{2} \mathbb{E }X}{3} \right)  .
\end{align*}
\end{lemma}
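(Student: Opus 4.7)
The plan is to prove the lemma via the standard exponential moment (Chernoff/Bernstein) method, optimizing the Cram\'er transform for Bernoulli variables. I would first handle the upper tail. Fix $\lambda>0$ and apply Markov's inequality to $e^{\lambda X}$, obtaining
\[
\mathbb{P}(X \ge \mathbb{E}X + t) \le e^{-\lambda(\mathbb{E}X+t)} \prod_{i=1}^{n} \mathbb{E} e^{\lambda X_i}.
\]
For each Bernoulli $X_i$ we have $\mathbb{E}e^{\lambda X_i} = 1 + p_i(e^{\lambda}-1) \le \exp(p_i(e^{\lambda}-1))$, using $1+u \le e^u$, which collapses the product into $\exp(\mu(e^{\lambda}-1))$ with $\mu = \mathbb{E}X$. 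Minimizing the resulting bound over $\lambda>0$ gives the optimal choice $\lambda = \ln(1+t/\mu)$ and produces the Cram\'er bound
\[
\mathbb{P}(X \ge \mu + t) \le \exp\!\Bigl(-\mu\,h(t/\mu)\Bigr), \qquad h(x) := (1+x)\ln(1+x)-x.
\]

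The key analytic step, and the place where the $\mathbb{E}X + t/3$ denominator is forced, is the elementary inequality $h(x) \ge \tfrac{x^{2}}{2+2x/3}$ for $x \ge 0$. I would verify it by comparing Taylor expansions (or equivalently by showing that $g(x) := h(x)(2+2x/3) - x^{2}$ satisfies $g(0)=g'(0)=g''(0)=0$ and $g'''(x)\ge 0$). Substituting $x=t/\mu$ and multiplying through by $\mu$ yields the claimed
\[
\mathbb{P}(X \ge \mathbb{E}X + t) \le \exp\!\left(-\frac{t^{2}}{2(\mathbb{E}X + t/3)}\right).
\]

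The lower tail is handled symmetrically, now with $\lambda > 0$ and Markov applied to $e^{-\lambda X}$. The same Bernoulli computation gives $\mathbb{P}(X \le \mu - t) \le \exp(-\mu\,h^{-}(t/\mu))$ with $h^{-}(x) = -x - (1-x)\ln(1-x)$. Here the relevant inequality is the cleaner $h^{-}(x) \ge x^{2}/2$ for $x \in [0,1]$, which follows at once from the Taylor series $h^{-}(x) = \sum_{k\ge 2} x^{k}/(k(k-1))$. This yields the second stated bound, with no corrective $t/3$ term. Finally, the two-sided statement for $\varepsilon \le 3/2$ is immediate: apply the two tails at $t = \varepsilon\mathbb{E}X$, note that in the upper-tail denominator $\mathbb{E}X + t/3 = (1+\varepsilon/3)\mathbb{E}X \le \tfrac{3}{2}\mathbb{E}X$, and union-bound the two events to obtain the factor of $2$.

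The only genuinely delicate step is verifying the sharp inequality $h(x) \ge x^{2}/(2+2x/3)$ that governs the upper tail; everything else is bookkeeping. I would expect to cite the version in \cite{JLR} rather than include the full computation, since this is precisely what the lemma statement does.
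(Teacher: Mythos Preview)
The paper does not prove this lemma at all: it is stated with a citation to \cite{JLR} and then used as a black box throughout. So there is nothing to compare against. Your proposal is the standard Cram\'er--Chernoff argument and is essentially what one finds in \cite{JLR}; you are right that the only nontrivial ingredient is the inequality $h(x)\ge x^2/(2+2x/3)$, and your suggested verification via $g'''(x)\ge 0$ works.

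One small slip: your lower-tail rate function should be $h^{-}(x)=x+(1-x)\ln(1-x)$, not $-x-(1-x)\ln(1-x)$. The Taylor series you quote, $\sum_{k\ge 2} x^{k}/(k(k-1))$, is indeed that of $x+(1-x)\ln(1-x)$, so the inequality $h^{-}(x)\ge x^2/2$ and the rest of your argument go through once the sign is corrected.
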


Let us now consider the following (simple) random walk on ${\mathbb{Z}}$. Understanding the behaviour of this Markov chain will be important in investigating simple families of graphs later. Let $X_{0}=0$, and for a given $t\geq0$, let
\[
X_{t+1}=
\begin{cases}
X_{t}+1 & \text{with probability }1/2\\
X_{t}-1 & \text{otherwise.}
\end{cases}
\]
It is known that with high probability, random variable $X_{t}$ stays relatively close to zero. We make this precise below using the Chernoff bound.

\begin{lemma}
\label{lem:walk} Let $n \in{\mathbb{N}}$ and $c \in(2, \infty)$. For a simple random walk $(X_{t})$ on ${\mathbb{Z}}$ with $X_{0} = 0$ we have that $|X_{t}|\le c \sqrt{n \log n}$ for every $t \in\{0, 1, \dots, n\}$ with probability at least $1-2n^{1-c^{2}/4}$.
\end{lemma}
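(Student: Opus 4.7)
The plan is to reduce the maximum over $t$ to a single-time estimate via a union bound, and to bound the single-time deviation with the Chernoff inequality stated in Lemma \ref{lem:Chernoff}. Concretely, I would write $X_t = \sum_{i=1}^{t} Z_i$ with i.i.d.\ $\pm 1$ increments $Z_i$, and then pass to Bernoulli variables by setting $Y_i = (Z_i+1)/2 \in \mathrm{Be}(1/2)$. This gives $X_t = 2 S_t - t$ where $S_t = \sum_{i=1}^{t} Y_i$ is a sum of independent indicators with $\mathbb{E} S_t = t/2$, so the event $|X_t| > c\sqrt{n \log n}$ coincides with $|S_t - \mathbb{E}S_t| > (c/2)\sqrt{n \log n}$, which is exactly the form to which Lemma \ref{lem:Chernoff} applies.

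Next, for a fixed $t \in \{1,\dots,n\}$ I would apply the two-sided Chernoff bound with deviation $s = (c/2)\sqrt{n \log n}$. For the upper tail, the exponent is $-s^2/\bigl(2(\mathbb{E}S_t + s/3)\bigr)$; since $\mathbb{E}S_t \le n/2$ and $s/3 = o(n)$, the denominator is at most $n$ (say, for $n$ large enough that $s/3 \le n/2$, which is where the statement is nontrivial anyway), giving an exponent at most $-(c^2/4)\log n$ and hence a bound of $n^{-c^2/4}$. The lower tail is handled similarly (and is actually easier because its denominator is just $2\mathbb{E}S_t \le n$). Summing the two tails yields $\mathbb{P}\bigl(|X_t| > c\sqrt{n \log n}\bigr) \le 2 n^{-c^2/4}$.

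Finally I would take a union bound over $t \in \{1,2,\dots,n\}$ (the case $t = 0$ being trivial since $X_0 = 0$):
\[
\mathbb{P}\Bigl( \exists \, t \le n : |X_t| > c\sqrt{n \log n}\Bigr) \;\le\; n \cdot 2 n^{-c^2/4} \;=\; 2 n^{1 - c^2/4},
\]
and the claim follows by passing to the complementary event.

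The only delicate point is the constant in the Chernoff exponent: one must check that the $s/3$ term in the denominator does not degrade the exponent from $c^2/4$ to something worse, which is what forces the extra hypothesis $c > 2$ in the statement (otherwise $2 n^{1 - c^2/4}$ would exceed $1$ for every $n$ and the bound would be vacuous). Assuming $c > 2$, the proof is just a careful bookkeeping of constants; no deeper idea than Chernoff plus union bound is required, and stronger results (for example via Doob's maximal inequality or the reflection principle) could tighten the $n^{1-c^2/4}$ to $n^{-c^2/4}$, but the union-bound version is already more than sufficient for the applications in the sequel.
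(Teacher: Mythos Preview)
Your approach is exactly the paper's: rewrite $X_t = 2S_t - t$ with $S_t$ a sum of independent $\mathrm{Be}(1/2)$'s, apply Lemma~\ref{lem:Chernoff} for each fixed $t$, and union-bound over $t \le n$. The lower-tail computation is fine and matches the paper verbatim.

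There is, however, a bookkeeping slip in your upper-tail step. The denominator in the upper-tail Chernoff exponent is $2(\mathbb{E}S_t + s/3) = t + \tfrac{2}{3}s$, and for $t = n$ this equals $n + \tfrac{2}{3}s > n$, not ``at most $n$'' as you claim. Your argument as written therefore only yields an exponent of roughly $-(c^2/4)\log n \cdot \bigl(1 - O(\sqrt{\log n / n})\bigr)$, which gives $n^{-c^2/4 + o(1)}$ per tail rather than the clean $n^{-c^2/4}$ needed for the stated bound. The paper sidesteps this by not invoking the upper-tail Chernoff at all: since $-X_t$ has the same distribution as $X_t$, one gets $\mathbb{P}(X_t > c\sqrt{n\log n}) = \mathbb{P}(X_t < -c\sqrt{n\log n}) \le n^{-c^2/4}$ directly from the lower-tail bound already established. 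With that one-line fix your proof is complete and identical to the paper's.
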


\begin{proof}
Fix $n\in{\mathbb{N}}$ and $c\in(2,\infty)$. Let us perform $n$ steps of a simple random walk on ${\mathbb{Z}}$ starting with $X_{0}=0$. Let $Y_{t}$ ($1\leq t\leq n$) denote the number of times the process goes `up' until time $t$. It is clear that $\mathbb{E}Y_{t}=t/2$ and
\[
X_{t}=Y_{t}-(t-Y_{t})=2(Y_{t}-t/2).
\]
For a given $t$, it follows from Chernoff bound (Lemma~\ref{lem:Chernoff}) that
\begin{align*}
\mathbb{P}\left(  X_{t}<-c\sqrt{n\log n}\right)   &  =\mathbb{P}\left( Y_{t}\leq\frac{t}{2}-\frac{c}{2}\sqrt{n\log n}\right) \\
&  \leq\exp\left(  -\frac{(c\sqrt{n\log n}/2)^{2}}{2(t/2)}\right) \\
&  \leq\exp\left(  -\frac{c^{2}}{4}\log n\right)  =n^{-c^{2}/4}.
\end{align*}
A symmetric argument can be used to get that $X_{t}>c\sqrt{n\log n}$ with probability at most $n^{-c^{2}/4}$. Finally, from a union bound we get that the probability that there exists $t$ ($1\leq t\leq n$) with $|X_{t}|>c\sqrt{n\log n}$ is at most $n\cdot2n^{-c^{2}/4}=2n^{1-c^{2}/4}$.
\end{proof}

\section{Bounds on the Cost of Drunkenness}\label{sec03}

In this section we place upper and lower bounds on the cost of drunkenness $F(G)$ when $k$ cops are available. We emphasize the case $k=c(G)$ but also consider values of $k \neq c(G)$. We start with simple graphs (namely: paths, cycles, trees, and grids) in order to prepare for slightly more complicated families in the next section.

\subsection{Paths and a Suboptimal Strategy}

In this subsection we play the game on $P_{n}$, a path on $n$ vertices ($V(P_{n})=\{0,1,\dots,n-1\}$, $E(P_{n})=\{\{i-1,i\}:i\in\{1,2,\dots ,n-1\}\}$). Clearly, $c(P_{n})=1$; that is, one cop can catch the adversarial robber. Since the drunk robber is easier to catch than the adversarial one, let us study the drunk robber playing against a single cop.

In this subsection we will compute the expected capture time using a \emph{suboptimal} strategy, namely starting the cop at $X_0=0$ and moving him to the other end until he reaches $n-1$ (or until capture takes place). It is clear that this strategy achieves capture; furthermore (as will become apparent in the following sections) many optimal strategies can be analyzed using this suboptimal one.

Let $Z_{t}=Y_{t}-X_{t}$ be the distance between players at time $t$. If the drunk robber starts at vertex $k\in\{0,1,\dots,n-1\}$, we have $Z_{0}=Y_{0}=k$. (In order to simplify the argument, we allow players to ``pass each other'' which is never the case in the real game; that is, $Z_{t}$ can be negative.) We can redefine the capture time as
\[
T_n=T_n(k)=\min\{t:Z_{t}\leq0\}.
\]
Now, it is not so difficult to see the behaviour of the sequence $(Z_{t})_{t \ge 0}$. Note that at time t, the maximum distance between players is $n-1-t$ which implies that the robber will be caught in at most $n-1$ steps. We have the following Markov chain to investigate: for $t\in\{0,1,\dots,n-2\}$, if $Z_{t}<n-1-t$, then
\[
Z_{t+1}=
\begin{cases}
Z_{t}-2 & \text{with probability 1/2 (the robber goes toward the cop)}\\
Z_{t} & \text{with probability 1/2 (the robber goes away from the cop)}.
\end{cases}
\]
If $Z_{t}=n-1-t$ (that is, the robber occupies the end of the path), then $Z_{t+1}=Z_{t}-2$ (deterministically).

Consider another Markov chain $Z_{t}^{\prime}$, which has the following simple behaviour: $Z_{0}^{\prime}=k$ and for every $t\geq0$, $Z_{t+1}^{\prime}=Z_{t}^{\prime}-2$ with probability 1/2; otherwise $Z_{t+1}^{\prime}=Z_{t}^{\prime}$. Define $T^{\prime}=\min\{t:Z_{t}\leq0\}$. In other words, we will be chasing the robber on the infinite ray $R$ ($V(R)={\mathbb{N}} \cup\{0\}$, $E(R)=\{\{i-1,i\}:i\in{\mathbb{N}}\}$), which is slightly more difficult for the cop. Hence, it is easy to prove that $\mathbb{E}(T_n~~|~~Z_{0}=k) \leq\mathbb{E}(T^{\prime}~~|~~Z_{0}^{\prime}=k) $. Moreover, it is also easy (using a recursive argument) to show that $\mathbb{E}(T^{\prime}~~|~~Z_{0}^{\prime}=k)=k$, and so $\mathbb{E}(T_n(k)) \leq k$. Now we are ready to show the following.

\begin{theorem}\label{thm:dst_pn} 
Consider that the cop starts on one end of the path $P_{n}$ and moves toward the other end. Let $T_{n}$ be the capture time, provided that the robber is drunk. Then,
\[
\frac{n}{2} \left(  1-O\left(  \frac{\log n}{n}\right)  \right)~~\leq~~\mathbb{E}T_{n}~~\leq~~\frac{n-1}{2}.
\]
\end{theorem}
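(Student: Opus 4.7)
The plan is to treat the two bounds separately. The upper bound is immediate: the discussion preceding the theorem already shows $\mathbb{E}T_n(k) \le k$ for each starting vertex $k$, and since $Y_0$ is uniform on $\{0,1,\ldots,n-1\}$,
\[
\mathbb{E}T_n \;=\; \frac{1}{n}\sum_{k=0}^{n-1}\mathbb{E}T_n(k) \;\le\; \frac{1}{n}\sum_{k=0}^{n-1}k \;=\; \frac{n-1}{2}.
\]

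For the lower bound, I would couple the drunk robber's random walk on $P_n$ to an unrestricted simple random walk on $\mathbb{Z}$ and invoke Lemma~\ref{lem:walk}. Let $\xi_1,\xi_2,\ldots$ be i.i.d.\ $\pm 1$ with equal probability, set $\tilde W_t = \sum_{s\le t}\xi_s$, and use these increments to drive the robber on $P_n$, reflecting only when he hits $0$ or $n-1$. For a suitably large constant $c$, Lemma~\ref{lem:walk} guarantees that the event
\[
A \;=\; \bigl\{|\tilde W_t|\le c\sqrt{n\log n}\text{ for every }t\le n\bigr\}
\]
has probability $1-o(1)$.

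On $A$, provided the starting vertex $k$ lies in the interior range $c\sqrt{n\log n}\le k\le n-1-c\sqrt{n\log n}$, the robber never reaches an endpoint of $P_n$, no reflection ever occurs, and $Y_t = k+\tilde W_t$ for all $t\le n$. Consequently $Z_t = k+\tilde W_t-t \ge k-c\sqrt{n\log n}-t$, which is strictly positive whenever $t<k-c\sqrt{n\log n}$, so $T_n(k)\ge k-c\sqrt{n\log n}$ on $A$. Taking expectations and then averaging over $k$---handling the $O(\sqrt{n\log n})$ starting vertices outside the interior range with the trivial bound $\mathbb{E}T_n(k)\ge 0$---yields, after dividing the contribution of the interior $k$'s by $n$, a lower bound of the required form $\frac{n}{2}(1-o(1))$ with the claimed rate.

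The main obstacle will be controlling the reflection at the right endpoint: each time the robber visits $n-1$ he is deterministically pushed back toward the cop, which breaks the clean identity $Y_t = k+\tilde W_t$ and can only speed up capture, so the coupling argument only survives when this visit does not occur. Lemma~\ref{lem:walk} is precisely what promotes ``the robber stays away from $n-1$ throughout the game'' into a high-probability statement, and this reduction is the engine of the proof.
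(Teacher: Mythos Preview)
Your upper bound is correct and matches the paper's argument exactly.

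For the lower bound, your coupling and invocation of Lemma~\ref{lem:walk} are the same as the paper's, but there is a quantitative gap: your pointwise estimate does \emph{not} deliver the claimed rate. From $T_n(k)\ge k-c\sqrt{n\log n}$ on $A$, averaging over $k$ yields only
\[
\mathbb{E}T_n \;\ge\; \frac{1}{n}\sum_{k\in\text{interior}}\bigl(k-c\sqrt{n\log n}\bigr)\,\mathbb{P}(A)
\;=\; \frac{n}{2}-O\bigl(\sqrt{n\log n}\bigr)
\;=\; \frac{n}{2}\left(1-O\!\left(\sqrt{\tfrac{\log n}{n}}\right)\right),
\]
because the $c\sqrt{n\log n}$ subtraction is paid for \emph{every} starting vertex $k$, not just those near $n-1$. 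This is strictly weaker than the theorem's $O(\log n/n)$.

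The paper avoids this loss with a sharper step. For $k\le n-1-c\sqrt{n\log n}$, on the high-probability event that the robber never reaches vertex $n-1$, the capture time $T_n(k)$ coincides with the unrestricted capture time $T'(k)$, and the identity $\mathbb{E}T'(k)=k$ (established just before the theorem) gives $\mathbb{E}T_n(k)\ge k\bigl(1-2n^{1-c^2/4}\bigr)$ directly---no $\sqrt{n\log n}$ is subtracted from these terms. The weaker pointwise bound $T_n(k)\ge k-c\sqrt{n\log n}$ is used only for the $O(\sqrt{n\log n})$ starting vertices with $k>n-1-c\sqrt{n\log n}$; their total contribution to the error in the average is of order $\frac{1}{n}\cdot\sqrt{n\log n}\cdot\sqrt{n\log n}=\log n$, which is exactly the stated rate. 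To repair your argument, replace the pointwise bound on the interior range by the expectation identity $\mathbb{E}[T_n(k)\mid\text{no reflection}]=\mathbb{E}T'(k)=k$.
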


Before we move to the proof of this theorem let us mention that, in fact, with a slightly more sophisticated argument, it is possible to show that $\mathbb{E}T_n=n/2-O(1)$.

\begin{proof}
Let $n \in{\mathbb{N}}$ and fix any $c>2$. The robber starts his walk on a vertex $k \in\{0, 1, \dots, n-1\}$. Let us note that he is captured after at most $n-1$ steps of the process (deterministically); that is, $T_n(k) \le n-1$. As we already mentioned $\mathbb{E }T_n(k) \le k$. Since the starting vertex for the robber is chosen uniformly at random, we get that $\mathbb{E }T_n \le \sum_{k=0}^{n-1} k/n = (n-1)/2$, so it remains to investigate a lower bound.

Suppose first that $k \le(n-1) - c \sqrt{n \log n}$. It follows from Lemma~\ref{lem:walk} that the robber reaches the other end of the path with probability at most $2n^{1-c^{2}/4}$. If this is the case, we apply a trivial lower bound for $T_n(k)$, namely, $T_n(k) \ge0$; otherwise we get that the (conditional) expectation for $T_n(k)$ is equal to $k$. Hence, $\mathbb{E }T_n(k) \ge k (1-2n^{1-c^{2}/4})$. Suppose now that $k > (n-1) - c \sqrt{n \log n}$. Using Lemma~\ref{lem:walk} one more time, we get that with probability at least $1-2n^{1-c^{2}/4}$ the robber is not caught before time $k - c \sqrt{n \log n}$.

Since the starting vertex for the robber is chosen uniformly at random, we get that
\begin{align*}
\mathbb{E }T_n  &  \ge\frac{1}{n} \sum_{k=0}^{n-1} \mathbb{E }T_n(k)\\
&  \ge\frac{1}{n} \left(  \sum_{k=0}^{n-1-c \sqrt{n \log n}} k + \sum_{k=n-c \sqrt{n \log n}}^{n-1} (k - c \sqrt{n \log n}) \right)  (1-2n^{1-c^{2}/4})\\
&  \ge\left(  \frac{n-1}{2} - c^{2} \log n \right)  (1-2n^{1-c^{2}/4}).
\end{align*}
For a given $n$, the parameter $c$ can be adjusted for the best outcome. To get an asymptotic behaviour, we can use, say, $c=3$ to get that
\[
\mathbb{E }T_n \ge\frac{n}{2} \left(  1 - O \left(  \frac{\log n}{n} \right) \right)  ,
\]
and the proof is complete.
\end{proof}

The proof of the theorem actually gives us more. We get that with probability tending to 1 as $n\rightarrow\infty$, for all starting points for the robber ($k \in\{0, 1, \dots, n-1\}$), the cop needs $k+O(\sqrt{n\log n})$ moves to catch the robber.

\subsection{Paths}

We continue studying a visible robber on $P_{n}$ but we now apply the optimal capture strategy (it is optimal for both adversarial and drunk robber). If $n$ is odd, we start by placing a cop on vertex $(n-1)/2$; if $n$ is even we have two optimal strategies, the cop can start on $n/2$ or $n/2-1$. In any case, after selecting an initial vertex the strategy is the same:\ the cop keeps moving toward the robber. Except for initial placement, this is the strategy examined in the previous subsection and we have $\mathrm{ct}(P_{n})=\lfloor n/2\rfloor$. We easily get the following result.

\begin{theorem}
\[
\frac{n}{4}\left(  1-O\left(  \frac{\log n}{n}\right)  \right)  ~~\leq~~\mathrm{dct(P_{n})~~\leq~~\frac{n}{4}.}
\]
In particular, $\mathrm{dct}(P_{n})=(1+o(1))n/4$ and the cost of drunkenness is
\[
F (P_{n}) = \frac{\mathrm{ct}(P_{n})}{\mathrm{dct}(P_{n})}=2+o(1).
\]
\end{theorem}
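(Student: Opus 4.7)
The plan is to mimic the analysis of Theorem \ref{thm:dst_pn}, applied to the optimal cop starting position (the middle of the path).

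For the upper bound, I would place the cop at $m = \lfloor n/2 \rfloor$ and let him run the chase strategy of the previous subsection. Conditional on $Y_0 = k$, the cop-robber distance process is stochastically dominated by the infinite-ray chain $Z'_t$ analyzed in the proof of Theorem \ref{thm:dst_pn}, starting from $|k - m|$, whose expected hitting time of zero is $|k - m|$. Hence $\mathbb{E}[T \mid Y_0 = k] \leq |k - m|$, and averaging over uniform $Y_0$,
\[
\mathrm{dct}(P_n) \leq \frac{1}{n}\sum_{k=0}^{n-1} |k - m| \leq \frac{n}{4}.
\]

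For the lower bound, fix an arbitrary cop starting vertex $x$ and cop strategy. The triangle inequality, combined with $|X_{t+1} - X_t| \leq 1$, gives the deterministic bound
\[
T \geq |x - Y_0| - |Y_T - Y_0|.
\]
Controlling the robber's displacement via Lemma \ref{lem:walk} (adapted to the reflected walk on $P_n$, where the displacement is only reduced by the boundary) and splitting off robber starting positions within $c\sqrt{n \log n}$ of each endpoint, exactly as in the proof of Theorem \ref{thm:dst_pn}, one obtains
\[
\mathbb{E} T(x) \geq \left( \frac{1}{n}\sum_{k=0}^{n-1} |x - k| - O(\log n) \right)\bigl(1 - o(1)\bigr) \geq \frac{n}{4}\left( 1 - O\left( \frac{\log n}{n} \right) \right),
\]
where the last inequality uses that $\frac{1}{n}\sum_k |x - k| \geq n/4 - O(1)$ for every $x$ (with the minimum attained at the median $x = m$). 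Taking the minimum over $x$ gives the desired lower bound on $\mathrm{dct}(P_n)$.

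Combining the two bounds yields $\mathrm{dct}(P_n) = (1+o(1))n/4$; since $\mathrm{ct}(P_n) = \lfloor n/2 \rfloor$ (the adversarial cop starts at the middle and closes the distance deterministically), we conclude $F(P_n) = 2 + o(1)$. The main obstacle is the lower bound for an \emph{arbitrary} cop strategy: the triangle inequality above captures the fact that the cop cannot cover more than one unit per round, while Lemma \ref{lem:walk}, together with a case-split near the two boundaries, controls the robber's displacement, so that no cop strategy can beat the naive chase by more than the stated error term.
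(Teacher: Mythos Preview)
Your upper bound matches the paper's: place the cop at the midpoint, dominate by the infinite-ray chain, and average $|k-m|$ over uniform $k$.

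Your lower bound takes a genuinely different route. The paper simply \emph{asserts} that starting at the middle and chasing is optimal for the drunk robber, then observes that after $Y_0$ is revealed the game is played on a sub-path of length between $\lfloor n/2\rfloor$ and $\lfloor n/2\rfloor+1$, and invokes Theorem~\ref{thm:dst_pn} on that sub-path. Your argument instead bounds an \emph{arbitrary} cop strategy via $T \ge |x-Y_0| - |Y_T-Y_0|$ and Lemma~\ref{lem:walk}. This is more honest, since the paper never proves optimality of the middle-chase strategy for the drunk robber; your approach closes that gap.

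However, you overstate the precision you get. The triangle-inequality bound sacrifices $|Y_T - Y_0| = O(\sqrt{n\log n})$ for \emph{every} starting position $k$, so after averaging you lose $O(\sqrt{n\log n})$, not $O(\log n)$; your displayed inequality with ``$-\,O(\log n)$'' should read ``$-\,O(\sqrt{n\log n})$'', and the resulting lower bound is $\tfrac{n}{4}\bigl(1 - O(\sqrt{(\log n)/n})\bigr)$ rather than $\tfrac{n}{4}\bigl(1 - O((\log n)/n)\bigr)$. The sharper error in Theorem~\ref{thm:dst_pn} comes from a feature your general argument cannot use: for the \emph{specific} chase strategy, when the robber never touches the far endpoint the distance process coincides with the infinite-ray chain and the conditional expectation is \emph{exactly} $k$, so the $\sqrt{n\log n}$ penalty is paid only by the $O(\sqrt{n\log n})$ starting positions near the boundary, contributing $O(\log n)$ in total. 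To recover the stated error term you would need either to prove that chasing is optimal (so the paper's reduction applies) or to replace the triangle inequality by a coupling that gives an exact identity away from the boundary. This discrepancy is minor and does not affect the asymptotic conclusions $\mathrm{dct}(P_n)=(1+o(1))n/4$ and $F(P_n)=2+o(1)$.
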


\begin{proof}
As we already mentioned, after the robber selects his initial vertex to start from, the game is played essentially on a path of length at most $\lfloor n/2\rfloor+1$. From Theorem~\ref{thm:dst_pn}, we get immediately that
\[
\mathrm{dct}(P_{n})\leq\mathbb{E} T_{\lfloor n/2\rfloor+1} \leq n/4.
\]
For a lower bound, we notice that the length of each subpath is at least $\lfloor n/2\rfloor$. By Theorem~\ref{thm:dst_pn},
\[
\mathrm{dct}(P_{n}) \geq\mathbb{E} T_{\lfloor n/2\rfloor} \geq\frac{n}{4}\left(  1-O\left(  \frac{\log n}{n}\right)  \right)  ,
\]
and the proof is complete.
\end{proof}

In the general case when $k \in\mathbb{N}$ cops are available, we need to `slice' a path into $k$ shorter paths and place a cop on their centers. We get that $\mathrm{dct}(P_{n},k)=(1+o(1))n/(4k)$.

\subsection{Cycles}

Let us play the game on a cycle $C_{n}$ for $n\geq4$ ($V(C_{n})=\{1,2,\dots,n\}$, $E(C_{n})=\{\{i,i+1\}:i\in\{1,2,\dots,n-1\}\}\cup\{\{1,n\}\}$). It is not difficult to see that $c(C_{n})=2$; we use two cops to chase the robber. They start by occupying two vertices at the distance $\lfloor(n+1)/2\rfloor$, the maximum possible distance on the cycle. When the robber selects his vertex to start with, they move toward him and capture occurs at time $\mathrm{ct}(C_{n})=\lfloor(n+1)/4\rfloor$. The same strategy is used when the robber is drunk.

As for paths, one can introduce a random variable $Z_{t}$ to measure the distance between the robber and cops at time $t$. The problem (almost) reduces to the problem on a path. We mention briefly the difference below but the formal proof is omitted. If $n$ is odd, then $Z_{t}$ has exactly the same behaviour as before. However, $Z_{0}=\lfloor(n+1)/2\rfloor$ with probability two times smaller than any other legal starting value (note that a uniform distribution on $V(C_{n})$ is used but there is just one vertex at the distance $\lfloor(n+1)/2\rfloor$). If $n$ is even, then we get a uniform distribution for starting values but the transition from $Z_{t}$ to $Z_{t+1}$ is slightly different, namely, there is a chance for $Z_{t}$ to stay at the same value, provided that the robber occupies the vertex which is at the maximum distance from cops. In any case, it is straightforward to show that both upper and lower bounds still hold so we get the following.

\begin{theorem}\label{thm:dct_cn}
\[
\frac{n}{8}\left(  1-O\left(  \frac{\log n}{n}\right)  \right)  ~~\leq~~\mathrm{dct}(C_{n})~~\leq~~\frac{n+1}{8}.
\]
In particular, $\mathrm{dct}(C_{n})=(1+o(1))n/8$ and the cost of drunkenness is
\[
F(C_{n}) = \frac{\mathrm{ct}(C_{n})}{\mathrm{dct}(C_{n})}=2+o(1).
\]
\end{theorem}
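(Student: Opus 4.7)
The plan is to reduce the cycle analysis to two copies of the path analysis from Theorem~\ref{thm:dst_pn}. First I would place the two cops at maximum distance $\lfloor(n+1)/2\rfloor$ on $C_n$, which splits the remaining vertices into two arcs of length roughly $n/2$; each cop then marches steadily along its arc toward the robber, exactly as in the single-cop path strategy. Once the robber selects his initial vertex uniformly at random, he lies on one of these two arcs (with at most one ``boundary'' vertex belonging to neither), so by symmetry it suffices to analyse the play on a single arc.

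For the upper bound, the chase on each arc is essentially identical to the path chase analysed in Theorem~\ref{thm:dst_pn}: the cop moves deterministically toward the robber, and the distance $Z_t$ evolves as a lazy decrement chain dominated by the ray chain $Z_t'$. Since each arc has length at most $\lfloor(n+1)/2\rfloor+1$, averaging the bound $\mathbb{E}T_n(k)\le k$ over a uniformly chosen starting vertex on the arc yields $\mathrm{dct}(C_n)\le (n+1)/8$ after a short calculation. The two mild deviations from the path setting, namely a slightly non-uniform initial distribution when $n$ is odd and a possible lazy transition at the far vertex when $n$ is even, contribute only $O(1/n)$ corrections that I would absorb into the stated bound.

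For the lower bound I would reuse the Chernoff-based argument from the proof of Theorem~\ref{thm:dst_pn}. Conditioning on which arc contains the robber, Lemma~\ref{lem:walk} gives that the drunk walker's displacement stays within $c\sqrt{n\log n}$ throughout the game with probability $1-2n^{1-c^2/4}$, so for all but a $O(\sqrt{n\log n})$-sized fraction of starting vertices the cop genuinely needs about $k$ steps, where $k$ is the initial distance along the arc. Averaging $k$ over the arc, whose length is at least $\lfloor(n-1)/2\rfloor$, and choosing, say, $c=3$ gives $\mathrm{dct}(C_n)\ge (n/8)\bigl(1-O(\log n/n)\bigr)$; once again any asymmetry coming from parity is of lower order.

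Combining the two bounds yields $\mathrm{dct}(C_n)=(1+o(1))n/8$, and since $\mathrm{ct}(C_n)=\lfloor(n+1)/4\rfloor=(1+o(1))n/4$, we obtain $F(C_n)=2+o(1)$. The point requiring the most care is not the asymptotics themselves but verifying that the minor structural differences between the path chase and the cycle chase, namely the possible ``collision'' of the two cops at the antipode and the slightly skewed initial distribution, perturb the leading constant by only lower-order terms; once the reduction is in place this is essentially a bookkeeping exercise, which is presumably why the authors choose to omit the formal proof.
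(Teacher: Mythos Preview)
Your overall approach matches the paper's: both reduce to the path chase of Theorem~\ref{thm:dst_pn} via a distance variable $Z_t$ between robber and cops, and both flag the same parity-dependent corrections (the skewed initial law when $n$ is odd, the possible lazy step at the antipode when $n$ is even) as lower-order effects that do not disturb the leading constant.

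There is, however, a factor-of-two slip in your bookkeeping that you should repair. You work with arcs of length roughly $n/2$ and then assert that ``averaging the bound $\mathbb{E}T_n(k)\le k$ over a uniformly chosen starting vertex on the arc yields $\mathrm{dct}(C_n)\le (n+1)/8$''; but if $k$ denotes the position along an arc of length $L\approx n/2$, its average is $\approx n/4$, not $n/8$. The missing ingredient is that \emph{both} cops move toward the robber, so the process that mimics the path chase is $Z_t=$ distance to the \emph{nearest} cop, and $Z_0$ is essentially uniform on $\{0,1,\dots,\lfloor(n+1)/4\rfloor\}$---equivalently, the path to which Theorem~\ref{thm:dst_pn} is applied has length about $n/4$, not $n/2$. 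The same ambiguity afflicts your lower bound: with two cops closing in, the capture time is approximately $\min(k,L-k)$ rather than $k$, and it is the average of this minimum over the arc of length $L\approx n/2$ that produces $n/8$; averaging $k$ itself would give $n/4$, which would contradict the upper bound. Once you make the role of the second cop explicit in the definition of $Z_t$ (as the paper does when it speaks of ``the distance between the robber and cops''), your argument coincides with the paper's sketch.
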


In the general case when $k \in\mathbb{N}$ cops are available, we spread them as evenly as possible. We get that $\mathrm{dct}(C_{n},k)=(1+o(1))n/(4k)$.

\subsection{Trees}

All families of graphs we discussed so far have a very nice property, namely, it is clear what the optimal strategy for the cops is. Once players fix their initial positions (that is, $X_{0}$ and $Y_{0}$), cops must move toward the robber in order to decrease the expected capture time. As we mentioned before, it is natural to measure the distance $Z_{t}$ between players at time $t$; $Z_{t}$ decreases by 2 if the robber makes a bad move or is occupying a leaf; otherwise the distance remains the same. This applies to the family of trees as well (note that $c(T)=1$ for any tree $T$). However, this time it is not clear which vertex should be used for the cop to start with in order to optimize the expected capture time. For this family, the random variable $Z_{t}$ decreases with probability $1/\deg(v)$, provided that the robber occupies vertex $v$, and the behaviour of the sequence $(Z_{t})_{t \ge 0}$ highly depends not only on the degree distribution but on the structure of a tree as well. It is non-trivial to estimate the cost of drunkenness for a particular tree without performing extensive calculations for every vertex as a starting point (these calculations can be performed by computer, using the algorithms of Section \ref{sec0602}). However, some sub-families of trees are still relatively easy to deal with.

Let us consider $d$ regular, rooted tree $T(d,k)$ of depth $k$. The root vertex on the level 0 has $d$ neighbours (children), vertices on levels 1 to $k-1$ have degree $d+1$ (one parent and $d$ children), leaves on the level $k$ have degree 1 (just one parent). There are $d^{i}$ vertices on level $i$ for a total of $(d^{k+1}-1)/(d-1)$ vertices. Due to the symmetry, the cop must start the game on the root. Since the drunk robber prefers to move toward leaves, it is natural to expect that his behaviour is similar to the one of the adversarial robber. Moreover, almost all vertices are located on levels $k-o(k)$ so the robber almost always starts on these vertices which is clearly a good move. We show that the cost of drunkenness is as best as possible; that is, $\mathrm{dct}(T(d,k))$ is tending to $\mathrm{ct}(T(d,k))=k$ as $k \to\infty$.

\begin{theorem}\label{thm:dct_tdk}
\[
k - O(\sqrt{k \log k}) ~~\leq~~\mathrm{dct}(T(d,k))~~\leq~~k.
\]
In particular, $\mathrm{dct}(T(d,k))=(1+o(1))k$ and the cost of drunkenness is
\[
F(T(d,k)) = \frac{\mathrm{ct}(T(d,k))}{\mathrm{dct}(T(d,k))}=1+o(1).
\]
\end{theorem}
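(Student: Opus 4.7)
The upper bound is immediate: the cop can play an optimal strategy against the adversarial robber, which guarantees capture within $\mathrm{ct}(T(d,k))=k$ rounds regardless of the robber's moves. In particular, the same strategy catches the drunk robber within $k$ rounds almost surely, so $\mathrm{dct}(T(d,k))\leq k$.

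For the lower bound, the plan is to show that with high probability the robber's level (distance from the root) stays well above the cop's level for almost $k$ rounds. By the symmetry of $T(d,k)$ and the uniform distribution of $Y_0$, we may restrict attention to cop strategies that start at the root. Since $|V_\ell|=d^\ell$ and the total number of vertices is $(d^{k+1}-1)/(d-1)$, the initial robber level $L_0$ satisfies $\Pr[L_0\leq k-i]=O(d^{-i})$, so $L_0\geq k-O(\log k)$ with probability $1-o(1)$. After $t$ rounds the cop's level is at most $t$ (it changes by at most one per round starting from $0$), so a necessary condition for capture at time $t$ is $L_t\leq t$. It therefore suffices to prove that, with probability $1-o(1)$, $L_t>t$ for every $t\leq k-c\sqrt{k\log k}$ for some constant $c>0$; this then yields $\mathbb{E}T\geq (k-O(\sqrt{k\log k}))(1-o(1))=k-O(\sqrt{k\log k})$.

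To control $L_t$, the idea is to couple the drunk robber's level process with a simple random walk $W_t$ on $\mathbb{Z}$ starting at $W_0=0$. At an internal (non-root, non-leaf) vertex the robber moves away from the root with probability $d/(d+1)\geq 1/2$, which stochastically dominates a fair coin, so the coupling can be set up to ensure that the increment of $L_t-L_0$ is at least the increment of $W_t$ at every such step. Applying Lemma~\ref{lem:walk} to $W_t$ with $n=k$ and any fixed $c>2$ gives $|W_t|\leq c\sqrt{k\log k}$ for all $t\leq k$ with probability at least $1-2k^{1-c^2/4}$, hence $L_t\geq L_0+W_t\geq k-O(\sqrt{k\log k})$ on this event. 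The main technical obstacle I expect is the handling of forced moves at the leaves, where the robber is deterministically pushed toward the root and the coupling with $W_t$ breaks down: this will require either bounding the number of leaf visits during the first $k$ rounds and absorbing their contribution into the $O(\sqrt{k\log k})$ error term, or replacing the coupling by a direct Chernoff-type bound exploiting the positive drift $(d-1)/(d+1)$ of $L_t$ at internal vertices. Either way, combining this bound with $L_0\geq k-O(\log k)$ gives $L_t>t$ uniformly for $t\leq k-O(\sqrt{k\log k})$ with probability $1-o(1)$, which completes the proof.
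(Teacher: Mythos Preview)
Your proposal follows essentially the same route as the paper: place the cop at the root by symmetry, observe that the robber's initial level $L_0$ is within $O(\sqrt{k\log k})$ of $k$ with high probability, and then use a concentration bound to argue that $L_t$ stays within $O(\sqrt{k\log k})$ of $k$ throughout the first $k$ rounds, so that capture (which requires $L_t\le t$) cannot occur before time $k-O(\sqrt{k\log k})$. The paper invokes Lemma~\ref{lem:Chernoff} directly for this step, whereas you propose to couple with a symmetric random walk and invoke Lemma~\ref{lem:walk}; since Lemma~\ref{lem:walk} is itself derived from Lemma~\ref{lem:Chernoff}, this is only a cosmetic difference.

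One point to correct: of your two suggested fixes for the leaf-reflection issue, the first one will not work as stated. Because the level process has strictly positive drift toward the leaves, the robber spends a \emph{constant fraction} of the first $k$ rounds at a leaf, so the number of forced down-moves is $\Theta(k)$, not $O(\sqrt{k\log k})$; you cannot simply absorb them into the error term. Your second suggestion---drop the symmetric coupling and apply Chernoff directly, exploiting the bias $d/(d+1)$ toward the leaves together with the reflecting barrier at level $k$---is the correct route, and it is precisely what the paper's (terse) proof does. Concretely, the process $k-L_t$ is a nonnegative walk with drift toward $0$ and reflection at $0$; a standard Chernoff/union-bound argument shows that its maximum over $k$ steps is $O(\sqrt{k\log k})$ (in fact $O(\log k)$) with probability $1-O(k^{-1})$, which is exactly the estimate the paper asserts.
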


\begin{proof}
Suppose that the drunk robber starts on level $i \ge k - \sqrt{k \log k}$. It follows from Lemma~\ref{lem:Chernoff} that with probability $1-O(k^{-1})$ he will be caught on level $k - O(\sqrt{k \log k})$. (In fact, it is also true for $i \ge k/d$, since the robber moves toward leaves with higher rate, namely, with probability $(d-1)/d$. However, an error following from this part is negligible comparing to the other error, so we stay with this obvious bound for $i$.) Therefore,
\begin{align*}
\mathrm{dct}(T(d,k))  &  \ge\sum_{i=k - \sqrt{k \log k}}^{k} \frac{d^{i}}{(d^{k+1}-1)/(d-1)} (k - O(\sqrt{k \log k})) (1 - O(k^{-1})) \\
&  = ( 1 - O(d^{- \sqrt{k \log k}}) ) (k - O(\sqrt{k \log k})) (1 - O(k^{-1})) \\
&  = k - O(\sqrt{k \log k}),
\end{align*}
which finishes the proof.
\end{proof}

\subsection{Grids}

The \emph{Cartesian product} of two graphs $G$ and $H$ is a graph with vertex set $V(G) \times V(H)$ and with the vertices $(u_{1}, v_{1})$ and $(u_{2}, v_{2})$ adjacent if either $u_{1} = u_{2}$ and $v_{1}, v_{2}$ are adjacent in $H$, or $v_{1} = v_{2}$ and $u_{1}, u_{2}$ are adjacent in $G$. We denote the Cartesian product of $G$ and $H$ by $G \square H$. In this subsection, we will study a square grid $P_{n} \square P_{n}$.

It is known that for any two trees $T_{1},T_{2}$, we have $c(T_{1} \square T_{2}) = 2$~\cite{NN}. The capture time of the Cartesian product of trees was recently studied in~\cite{Mehrabian}. It was shown that for any two trees $T_{1}, T_{2}$ we have
\[
\mathrm{ct}(T_{1}\square T_{2}) = \left\lfloor \frac{D(T_{1} \square T_{2})}{2} \right\rfloor = \left\lfloor \frac{D(T_{1}) + D(T_{2})}{2} \right\rfloor,
\]
where $D=D(G)$ is the diameter of $G$. In particular, for a square grid we have that $\mathrm{ct}(P_{n} \square P_{n}) = n-1$.

We will show that the cost of drunkenness for a grid is asymptotic to $8/3$.

\begin{theorem}\label{thm:dct_grid}
\[
\mathrm{dct}(P_{n} \square P_{n})= (1+o(1)) \frac38 n,
\]
and the cost of drunkenness is $F(P_{n} \square P_{n}) = 8/3 + o(1)$.
\end{theorem}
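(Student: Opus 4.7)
Plan. I would prove matching bounds of $(3/8+o(1))n$ on $\mathrm{dct}(P_n \square P_n)$; both reduce to a common $2$-median computation on the grid. Let $M(n) := \min_{x \in V^2}\, \mathbb{E}_{Y_0}[\min_i d(x_i, Y_0)]$, where $Y_0$ is uniform on $V = V(P_n \square P_n)$ and $d$ is the graph distance (equal to $L^1$ distance in the grid). A direct calculation shows $M(n) = (3/8+o(1))n$, attained at $c^1 = (\lfloor n/2\rfloor, \lfloor n/4\rfloor)$ and $c^2 = (\lfloor n/2\rfloor, \lfloor 3n/4\rfloor)$: restricted to this placement one has $\min_i d(c^i, Y_0) = |r_x-\lfloor n/2\rfloor| + \min(|r_y-\lfloor n/4\rfloor|, |r_y-\lfloor 3n/4\rfloor|)$, whose expectation equals $n/4 + n/8 + o(n) = 3n/8 + o(n)$; that no other placement does better is a short case analysis.

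For the upper bound, I would place the cops at $c^1$ and $c^2$ as above. Given the random initial robber position $Y_0$, designate the cop in the same half of the grid (split by the middle row) as the \emph{chaser} and ignore the other in the analysis. The chaser uses a balanced-greedy strategy: each step, move one edge so as to decrease the Manhattan distance $D_t$ by one, always reducing whichever of $|\Delta x|$ and $|\Delta y|$ is currently larger. When cop and robber share neither row nor column and the robber is at an interior vertex, the drunk robber's random step has expected change in $D_t$ equal to zero (by the identity $|r_x{+}1{-}c_x| + |r_x{-}1{-}c_x| = 2|r_x{-}c_x|$ for $|r_x{-}c_x|\geq 1$, and similarly in $y$), so $\mathbb{E}[D_{t+1}-D_t \mid \mathcal{F}_t] = -1$. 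Alignment ($|\Delta x|=0$ or $|\Delta y|=0$) weakens the drift to $-1/2$, but under the balanced rule it only occurs once $\min(|\Delta x|, |\Delta y|)$ first hits zero, at which moment Lemma~\ref{lem:walk}, applied to the coordinate-difference random walks, shows the other coordinate is $O(\sqrt{n\log n})$ with high probability. The aligned endgame therefore contributes only $o(n)$ to $\mathbb{E}[T]$, and a supermartingale/optional-stopping argument on $D_t + t$ yields $\mathbb{E}[T \mid Y_0] \leq d(c^\ast, Y_0) + o(n)$, where $c^\ast$ is the chaser; averaging over $Y_0$ gives $\mathrm{dct}(P_n \square P_n) \leq (3/8 + o(1))n$.

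For the lower bound, fix any cop placement $x = (x_1, x_2)$ and any cop strategy, and let $i^\ast$ denote the cop that captures, so $Y_T = X_T^{i^\ast}$. Since cop $i^\ast$ and the robber each take $T$ steps,
\[
\min_i d(x_i, Y_0) \;\leq\; d(x_{i^\ast}, Y_0) \;\leq\; d(x_{i^\ast}, Y_T) + d(Y_T, Y_0) \;\leq\; T + d(Y_0, Y_T).
\]
By the upper bound just established and Markov's inequality, $\mathbb{P}(T > Cn) = o(1)$ for $C$ sufficiently large; on the complementary event, Lemma~\ref{lem:walk} applied to each coordinate of the drunk robber's trajectory gives $d(Y_0, Y_T) = O(\sqrt{n\log n})$ with probability $1-o(1)$. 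Taking expectations, averaging over $Y_0$, and minimizing over $x$,
\[
\mathrm{dct}(P_n \square P_n) \;\geq\; M(n) - o(n) \;=\; \frac{3n}{8} - o(n).
\]

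The main obstacle is the aligned-regime analysis in the upper bound, since the naive drift estimate loses a factor of $2$ once the cop shares a row or column with the drunk robber. The balancing rule is designed precisely to postpone alignment until the remaining distance is of lower order; the rigorous argument uses Lemma~\ref{lem:walk} to control the fluctuations of $|\Delta x_t|$ and $|\Delta y_t|$ (each a random walk perturbed by $\pm 1$ with probability $1/4$ by the robber, and decremented by $1$ in the rounds the cop chooses that coordinate direction), showing that neither hits zero while the total Manhattan distance is still $\omega(\sqrt{n\log n})$. The random-walk concentration needed for the triangle-inequality step in the lower bound is a second, and by comparison routine, application of the same lemma.
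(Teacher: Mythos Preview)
Your approach is essentially the same as the paper's: both reduce the problem to the $2$-median computation on the grid by using random-walk concentration (Lemma~\ref{lem:walk}) to argue that the drunk robber stays within $o(n)$ of its starting vertex, so that expected capture time equals $\min_i d(x_i,Y_0)+o(n)$. The paper's proof is far sketchier than yours---it simply observes that each coordinate of the robber performs a lazy random walk, invokes concentration to say the robber is ``not moving at all'' from large distance, and declares the reduction to the $2$-median; you supply the separate upper- and lower-bound mechanics that the paper leaves implicit.

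Two small points. First, your balanced-greedy upper bound is correct but more elaborate than necessary: since the robber w.h.p.\ remains in an $O(\sqrt{n\log n})$-ball around $Y_0$, the cop can simply walk along any geodesic to $Y_0$ (which takes $d(c^\ast,Y_0)$ steps) and then chase the robber inside that ball, where any sensible strategy finishes in $o(n)$ further steps; this bypasses the aligned-regime analysis entirely and is closer to what the paper has in mind. Second, in your lower bound, the sentence ``$\mathbb{P}(T>Cn)=o(1)$ for $C$ sufficiently large'' does not follow from Markov with a \emph{fixed} $C$: you only get $\mathbb{P}(T>Cn)\le (3/8+o(1))/C$, which is a constant. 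The fix is routine---assume without loss of generality $\mathbb{E}T\le n$ (else the lower bound is trivial), apply Markov with threshold $n\log n$ to get $\mathbb{P}(T>n\log n)\le 1/\log n=o(1)$, and then apply the random-walk concentration over the longer horizon $n\log n$, which still gives $d(Y_0,Y_T)=O(\sqrt{n}\log n)=o(n)$.
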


\begin{proof}
Suppose that the drunk robber occupies an internal vertex $(u,v)$. The decision where to go from there can be made in the following way: toss a coin to decide whether modify the first coordinate ($u$) or the second one ($v$); independently, another coin is tossed to decide whether we increase or decrease the value. Hence the robber will move with probability 1/4 to one of the four neighbors of $(u,v)$. Note that, if we restrict ourselves to look at one dimension only (for example, let us call it North/South direction) we see the robber going North with probability 1/4, going South with the same probability and staying in place with probability 1/2. In other words the robber performs a \emph{lazy} random walk on the path. Hence, both coordinates behave similarly to the lazy random walk on integers (move with probability $1/2$; do nothing, otherwise). The same argument as in the previous proofs can be used to show that with probability, say, $1-o(n^{-1})$, the robber stays within the distance $O(\sqrt{n \log n})=o(n)$ from the initial vertex. Hence, if we look at the grid from the `large distance' the drunk robber is not moving at all.

Therefore, since we would like to investigate an asymptotic behaviour, the problem reduces to finding a set $S$ consisting of two vertices such that the average distance to $S$ is as small as possible. Cops should start on $S$ to achieve the best outcome. It is clear that, due to the symmetry of $P_{n} \square P_{n}$, there are two symmetric optimal configurations for set $S$: 
\begin{align*}
S = \{ (n/2+O(1),n/4+O(1)), (n/2+O(1),3n/4+O(1))\}, \\
S = \{ (n/4+O(1),n/2+O(1)), (3n/4+O(1),n/2+O(1))\}.
\end{align*}
In any case, the average distance is
\[
\sum_{u=0}^{n-1} \sum_{v=0}^{n-1} dist( (u,v), S) = (1+o(1)) 8n \int_{x = 0}^{1/2} \int_{y=0}^{1/4} (x+y) dy dx = (1+o(1)) \frac38 n.
\]
The result follows.
\end{proof}

\section{The cost of drunkenness}\label{sec04}

In this section we show that the cost of drunkenness can be arbitrarily close to any real number $c \in[1,\infty)$. In order to do it, we introduce two families of graphs, barbells and lollipops.

\subsection{Barbell}

Let $n \in\mathbb{N}$ and $c \ge0$. The \emph{barbell} $B(n,c)$ is a graph that is obtained from two complete graphs $K_{\lfloor cn \rfloor}$ connected by a path $P_{n}$ (that is, one end of the path belongs to the first clique whereas the other end belongs to the second one). The number of vertices of $B(n,c)$ is $(1+2c)n + O(1)$, $c(B(n,c))=1$. In order to catch (either the adversarial or the drunk) robber, the cop should start at the center of the path and move toward the robber; $\mathrm{ct}(B(n,c)) = n/2 + O(1)$. This family can be used to get any ratio from $(1,2]$.

\begin{theorem}\label{thm:dct_cn1} 
Let $c \ge0$. Then,
\[
\mathrm{dct}(B(n,c))=(1+o(1)) \frac{n}{2} \cdot\frac{1+4c}{2+4c},
\]
and the cost of drunkenness is
\[
F(B(n,c))=\frac{\mathrm{ct}(B(n,c))}{\mathrm{dct}(B(n,c))}=1 + \frac{1}{1+4c}+o(1).
\]
\end{theorem}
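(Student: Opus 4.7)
The plan is to derive matching upper and lower bounds of $(1+o(1))(n/2)(1+4c)/(2+4c)$, both reflecting the fact that the cop's optimal play is to start at the midpoint of the connecting path.

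\emph{Upper bound.} I propose the explicit strategy: the cop starts at the midpoint of the path $P_n$ and, having observed the robber's initial vertex $y$, moves at each subsequent step along a shortest path toward $y$ (i.e., toward the relevant clique's path endpoint $v_K$ if $y$ lies in a clique $K$, and toward $y$ itself if $y$ lies on the path). I would split the analysis by where $y$ lies under the uniform distribution on $V$. When $y$ lies in one of the two cliques (event probability $2c/(1+2c)+o(1)$), the cop reaches the relevant clique's endpoint $v_K$ in $\lfloor n/2 \rfloor$ steps and captures in at most one further step, provided the drunk robber has not escaped the clique during transit. The key probabilistic estimate is that this escape probability is $O(1/n)$: since the robber is started at a uniformly chosen clique vertex, its random walk has uniform marginal distribution on the clique at every step (ignoring the rare escape event), so $\mathbb{P}(\text{robber at }v_K\text{ at step }t) = 1/\lfloor cn\rfloor$ and the expected number of visits to $v_K$ in $n/2$ steps is $1/(2c)$; each such visit yields an escape with probability $1/\lfloor cn\rfloor$, so by a union bound the escape probability is at most $1/(2c \lfloor cn\rfloor) = O(1/n)$. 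Even in the rare escape event, Lemma \ref{lem:walk} limits the robber's post-escape drift on the path to $O(\sqrt{n\log n})$, so the contribution of this event to the expected capture time is $o(1)$. When $y=i$ lies on the interior of the path (event probability $1/(1+2c)+o(1)$), Lemma \ref{lem:walk} again ensures the robber stays within $O(\sqrt{n\log n})$ of $i$ throughout the chase, so the game reduces (up to an $o(1)$ exception) to the one-dimensional drunk chase of Theorem \ref{thm:dst_pn}, giving conditional expected capture time $|i-\lfloor n/2\rfloor|+o(n)$. Summing,
\[
\mathrm{dct}(B(n,c)) \le \frac{1}{|V|}\Big[\, 2\lfloor cn\rfloor\cdot\frac{n}{2} + \sum_{i=1}^{n-2}|i-\lfloor n/2\rfloor|\,\Big] + o(n) = (1+o(1))\,\frac{n}{2}\cdot\frac{1+4c}{2+4c}.
\]

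\emph{Lower bound.} For any cop starting position $x$, I use two structural lower bounds on $\mathrm{dct}_{x,y}$: (i) if $y$ lies in a clique $K$ with path endpoint $v_K$ and $x\notin K$, then $\mathrm{dct}_{x,y}\ge d(x,v_K)$, because $v_K$ is the only vertex linking $K$ to the rest of the graph and the cop cannot reach the robber without first visiting $v_K$; (ii) if $y=i$ is on the interior of the path, then adapting the potential-function argument from the lower bound of Theorem \ref{thm:dst_pn}, using Lemma \ref{lem:walk} to bound the robber's deviation, gives $\mathrm{dct}_{x,i}\ge |i-x|(1-o(1))$. When $x$ lies on the path, summing over $y$ yields a clique contribution $\lfloor cn\rfloor(d(x,0)+d(x,n-1)) = \lfloor cn\rfloor(n-1)$ (independent of the choice of $x\in V(P_n)$), while the path-interior contribution $\sum_{i=1}^{n-2}|i-x|$ is a standard $L^1$-median expression, minimized at $x=\lfloor n/2\rfloor$ with value $(1+o(1))n^2/4$. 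A short separate calculation confirms that starting positions inside a clique strictly worsen the total (inflating the path contribution to $(1+o(1))n^2/2$, which easily offsets the unchanged clique contribution). Dividing by $|V|=(1+2c)n+O(1)$, the minimum value matches the upper bound.

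The main obstacle is the upper-bound treatment of the robber-in-clique case: establishing that the escape probability is $O(1/n)$ (rather than merely $o(1)$) and then handling the rare escape event carefully enough that its total contribution to $\mathrm{dct}(B(n,c))$ is $o(n)$. This combines the visit-counting argument sketched above with a use of Lemma \ref{lem:walk} to control the robber's drift on the path after escape, together with the crucial observation that the cop has already reached the clique endpoint by time $n/2$ regardless of whether the escape occurred, so the remaining expected time to capture is itself only $O(\sqrt{n\log n})$. The lower bound for the path-interior case, while a straightforward adaptation of Theorem \ref{thm:dst_pn}, also requires care for the $O(\sqrt{n\log n})$ values of $i$ close to a clique endpoint, but these contribute negligibly to the weighted average.
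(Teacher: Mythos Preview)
Your approach is essentially the same as the paper's: place the cop at the midpoint of the path, split the analysis according to whether the robber starts in a clique (probability $2c/(1+2c)+o(1)$, contributing $n/2+o(n)$ to the expected time) or on the path (probability $1/(1+2c)+o(1)$, contributing $n/4+o(n)$ via the one-dimensional chase of Theorem~\ref{thm:dst_pn}), and combine.

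Your treatment is more thorough than the paper's in two respects. First, for the robber-in-clique case the paper simply invokes Lemma~\ref{lem:walk} to assert that capture occurs at time $n/2+O(\sqrt{n\log n})$ with probability $1-o(n^{-1})$; your visit-counting argument (expected $O(1)$ visits to $v_K$, each giving an $O(1/n)$ escape chance) makes this step genuinely rigorous, since Lemma~\ref{lem:walk} as stated does not directly model the clique dynamics. Second, you supply an explicit lower-bound argument showing that no alternative starting vertex for the cop improves on the midpoint, whereas the paper simply asserts before the theorem that ``the cop should start at the center of the path'' and then analyses only that strategy. Both additions are straightforward but do fill gaps the paper leaves implicit; the final computation and the numerical outcome are identical.
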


\begin{proof}
The drunk robber starts on a clique with probability $(2c)/(1+2c) + o(1)$. If this is the case, the capture occurs at time $n/2 + O(\sqrt{n \log n})$ with probability, say, $1-o(n^{-1})$ by Lemma~\ref{lem:walk}. If the robber chooses a vertex at the distance $k$ from the robber to start with, he is captured after $k+O(\sqrt{n \log n})$ steps, again with probability $1-o(n^{-1})$. Hence the expected capture time is
\[
(1+o(1)) \left(  \frac{2c}{1+2c} \cdot\frac{n}{2} + \frac{1}{1+2c} \cdot
\frac{n}{4} \right)  = (1+o(1)) \frac{n}{2} \cdot\frac{1+4c}{2+4c}.
\]
The theorem holds.
\end{proof}

\subsection{Lollipop}

Let $n \in\mathbb{N}$ and $c \ge0$. The \emph{lollipop} $L(n,c)$ is a graph that is obtained from a complete graph $K_{\lfloor cn \rfloor}$ connected to a path $P_{n}$ (that is, one end of the path belongs to the clique). The number of vertices of $L(n,c)$ is $(1+c)n + O(1)$, and the cop number $c(L(n,c))$ is $1$. In order to catch the perfect robber, the cop should start at the center of the path and move toward the robber; $\mathrm{ct}(L(n,c)) = n/2 + O(1)$. However, it is not clear what the optimal strategy for the drunk robber is. The larger the clique is, the closer to the clique the cop should start the game.

\begin{theorem}\label{thm:dct_cn2} 
Let $c \ge0$. Then,
\[
\mathrm{dct}(L(n,c))=
\begin{cases}
(1+o(1)) \frac{n}{4} \cdot\frac{(\sqrt{2}-1+c)(\sqrt{2}+1-c)}{1+c}, \mbox{ for } c \in[0,1]\\
(1+o(1)) \frac{n}{2(1+c)}, \mbox{ for } c > 1.
\end{cases}
\]
and the cost of drunkenness is
\[
F(L(n,c))=\frac{\mathrm{ct}(L(n,c))}{\mathrm{dct}(L(n,c))}=
\begin{cases}
\frac{2(1+c)}{(\sqrt{2}-1+c)(\sqrt{2}+1-c)} + o(1), \mbox{ for } c \in[0,1]\\ 
(1+c) + o(1), \mbox{ for } c > 1.
\end{cases}
\]
\end{theorem}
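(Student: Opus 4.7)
The plan is to adapt the argument for $B(n,c)$ (Theorem \ref{thm:dct_cn1}), with the new ingredient that, because only one end of the path of $L(n,c)$ carries a clique, the cop's optimal starting position on the path is no longer at the center and must be optimized as a free parameter. I will parameterize the cop's initial vertex as $a=\lfloor \alpha n \rfloor$ on the path with $\alpha \in [0,1]$ (starting at a non-shared clique vertex is dominated by starting at the shared vertex $a=0$, by the same one-step argument as for $B(n,c)$), obtain the expected capture time as a function of $\alpha$, and then minimize.

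First I would establish a ``frozen-robber'' picture on the time scale $n$. For a robber starting on the path, Lemma \ref{lem:walk} gives displacement at most $O(\sqrt{n \log n})=o(n)$ over any window of length $O(n)$ with probability $1-o(1)$. For a robber starting in the clique, I would show the escape time to vertex $1$ of the path is $\Theta(c^2 n^2)$: reaching the shared vertex $0$ inside a complete graph on $\sim cn$ vertices is a hitting time with mean $\Theta(cn)$, and from vertex $0$ the robber steps onto the path with probability only $1/\lfloor cn \rfloor=\Theta(1/n)$. Hence a robber starting in the clique is still in the clique after $O(n)$ rounds with probability $1-o(1)$.

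Using the natural cop strategy (chase the robber along the path; if he is in the clique, walk to vertex $0$ and step into the clique), I would compute the conditional capture time in the two cases. If the robber starts in the clique (total probability $c/(1+c)+o(1)$), capture occurs at time $a+O(1)$ with probability $1-o(1)$. If the robber starts on the path at vertex $k$, the argument behind Theorem \ref{thm:dst_pn} applied to the relevant sub-path gives capture time $|k-a|+O(\sqrt{n \log n})$ with probability $1-o(1)$. Averaging over the uniform robber start yields
\[
\mathrm{dct}(L(n,c)) \;=\; (1+o(1))\, n \cdot g(\alpha), \qquad g(\alpha)\;=\;\frac{c\alpha}{1+c}+\frac{\alpha^2+(1-\alpha)^2}{2(1+c)}.
\]
A short calculation shows $g'(\alpha)=0$ at $\alpha^*=(1-c)/2$. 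For $c\in[0,1]$ this is an interior minimum on $[0,1]$, yielding $g(\alpha^*)=\frac{1+2c-c^2}{4(1+c)}=\frac{(\sqrt 2-1+c)(\sqrt 2+1-c)}{4(1+c)}$; for $c>1$ the constrained minimum on $[0,1]$ is at $\alpha^*=0$, giving $g(0)=\frac{1}{2(1+c)}$. The matching lower bound follows from $T \geq d(X_0,Y_T)\geq d(X_0,Y_0)-d(Y_0,Y_T)$ combined with the $o(n)$ bound on $d(Y_0,Y_T)$ from the frozen-robber picture; this reduces the optimization to minimizing $\mathbb{E}_{Y_0}\,d(X_0,Y_0)$ over the cop's initial position $X_0$, which returns the same $g(\alpha)$.

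The main technical obstacle I anticipate is the rigorous verification of the clique hitting-time bound $\Theta(c^2 n^2)$ and the bookkeeping for the low-probability events in which the robber \emph{does} cross between the path and the clique during the chase. These events must be shown to contribute only $o(n)$ to the expected capture time, and for the lower bound the displacement bound on $d(Y_0,Y_T)$ must be made uniform in the cop's strategy, which in turn requires a moderate tail bound on $T$ for each strategy considered.
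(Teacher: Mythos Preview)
Your approach is essentially the same as the paper's: parameterize the cop's starting position on the path by $b\in[0,1]$ (your $\alpha$), compute the expected capture time as $(1+o(1))\frac{n}{1+c}\bigl(b^{2}+(c-1)b+\tfrac12\bigr)$ using Lemma~\ref{lem:walk}, and minimize to obtain $b=\max\{(1-c)/2,0\}$. Your proposal is in fact slightly more careful than the paper's own argument, which omits the explicit lower-bound step $T\ge d(X_0,Y_0)-d(Y_0,Y_T)$ that you supply.
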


Before we move to the proof of this result, let us mention that the cost of drunkenness (as a function of the parameter $c$) has an interesting behaviour. For $c=0$ it is $2$ (we play on the path), but then it is decreasing to hit its minimum of $1+\sqrt{2}/2$ for $c=\sqrt{2}-1$. After that it is increasing back to $2$ for $c=1$, and goes to infinity together with $c$. Therefore, this family can be used to get any ratio at least $1+\sqrt{2}/2 \approx1.71$.

\begin{proof}
Let the cop start on vertex $v$ at the distance $(1+o(1))bn$ from the clique ($b \in[0,1]$ will be chosen to obtain the minimum expected capture time). The drunk robber starts on a clique with probability $c/(1+c)+o(1)$. If this is the case, the capture occurs at time $bn + O(\sqrt{n \log n})$ with probability, say, $1-o(n^{-1})$ by Lemma~\ref{lem:walk}. If the robber chooses vertex at the distance $k$ from the cop, then he is captured, again with probability $1-o(n^{-1})$, after $k+O(\sqrt{n \log n})$ rounds. The robber starts between the cop and the clique with probability $b/(1+c)+o(1)$ and on the other side with remaining probability. Hence the expected capture time is equal to
\begin{align*}
(1+o(1))  &  \left(  \frac{c}{1+c} \cdot bn + \frac{b}{1+c} \cdot\frac{bn}{2} + \frac{1-b}{1+c} \cdot\frac{(1-b)n}{2} \right) \\
&  = (1+o(1)) \frac{n}{1+c} \left(  b^{2} + (c-1)b + 1/2 \right)  .
\end{align*}
The above expression is a function of $b$ (that is, a function of the starting vertex $v$ for the cop) and is minimized at
\[
b = \min\left\{  \frac{1-c}{2} , 0 \right\}  .
\]
The theorem holds.
\end{proof}

It follows immediately from Theorems~\ref{thm:dct_tdk},~\ref{thm:dct_cn1}, and~\ref{thm:dct_cn2} that the cost of drunkenness can be arbitrarily close to any constant $c \ge1$.

\begin{corollary}
For every real constant $c\geq1$, there exists a sequence of graphs $(G_{n})_{n \ge 1}$ such that
\[
\lim_{n\rightarrow\infty}F(G_{n})=\lim_{n\rightarrow\infty}\frac{\mathrm{ct}(G_{n})}{\mathrm{dct}(G_{n})}=c.
\]
\end{corollary}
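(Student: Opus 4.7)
The plan is to exhibit, for each target value $c \ge 1$, an explicit sequence of graphs drawn from the families already analysed whose cost of drunkenness converges to $c$. The three families at our disposal (complete regular trees, barbells, and lollipops) between them sweep out the whole half-line $[1,\infty)$: by Theorem~\ref{thm:dct_tdk} trees give limit $1$; by Theorem~\ref{thm:dct_cn1}, as the parameter $\alpha$ ranges over $[0,\infty)$, the barbell limit $1+\frac{1}{1+4\alpha}$ ranges continuously over $(1,2]$; and by the second branch of Theorem~\ref{thm:dct_cn2}, as $\alpha$ ranges over $(1,\infty)$, the lollipop limit $1+\alpha$ ranges continuously over $(2,\infty)$. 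So the strategy is to partition the range of $c$ into three cases and read off the appropriate parameter in each case.

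First, for $c=1$, I would fix any $d\ge 2$ and take $G_n = T(d,n)$; Theorem~\ref{thm:dct_tdk} immediately gives $F(G_n)=1+o(1)\to 1$. Next, for $c\in(1,2]$, I would solve $1+\frac{1}{1+4\alpha}=c$ to obtain $\alpha(c) = \frac{2-c}{4(c-1)} \ge 0$ and set $G_n = B(n,\alpha(c))$; Theorem~\ref{thm:dct_cn1} then gives $F(G_n)\to c$. The boundary $c=2$ corresponds to $\alpha(c)=0$, where $B(n,0)$ degenerates to the path $P_n$ and the value $F=2$ is recovered from the earlier path analysis, while $c\to 1^+$ forces $\alpha(c)\to\infty$. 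Finally, for $c>2$, I would set $\alpha = c-1 > 1$ and take $G_n = L(n,c-1)$; the second branch of Theorem~\ref{thm:dct_cn2} delivers $F(G_n) = (1+(c-1))+o(1) \to c$.

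There is no real obstacle here: all of the analytic work has been carried out in Theorems~\ref{thm:dct_tdk}, \ref{thm:dct_cn1}, and~\ref{thm:dct_cn2}, and the corollary amounts to the observation that the limit ranges of those three families overlap so as to cover the whole of $[1,\infty)$. The only bookkeeping point worth checking is continuity at the joint $c=2$, where the barbell family (with $\alpha=0$) and the lollipop family (as $\alpha\to 1^+$) both deliver $F=2$, so the three cases glue together without leaving any value of $c$ uncovered.
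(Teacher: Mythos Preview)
Your proposal is correct and follows exactly the approach the paper intends: the corollary is stated immediately after the remark that it follows from Theorems~\ref{thm:dct_tdk}, \ref{thm:dct_cn1}, and~\ref{thm:dct_cn2}, with no further proof given. You have simply spelled out the (routine) verification that the limit ranges of the three families cover all of $[1,\infty)$, including the explicit parameter choices $\alpha(c)=\frac{2-c}{4(c-1)}$ for barbells and $\alpha=c-1$ for lollipops, which the paper leaves implicit.
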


\section{Computational Aspects}\label{sec06}

In this section we deal with computational aspects of the cop against drunk robber problem. Our analysis holds for any number of cops, that is, we no longer assume that $k=c\left(G\right)$.

\subsection{Computing expected capture time for a given strategy\label{sec0601}}

Suppose that we are given a graph and we fix a strategy before the game actually starts. We will now show how to explicitly compute the probability of capture at time $t\in\{0,1,2,\ldots\}$ as well as the expected capture time.

Fixing a strategy in advance is the best one can do for the invisible robber case (see Section~\ref{sec07}) but for a visible one, cops should adjust their strategy based on the behaviour of the opponent; this will be treated in the next subsection~\ref{sec0602}. However, the approach presented here is less demanding computationally and can be used to provide an upper bound for the optimal expected capture time.

Let $G=(V,E)$ be a connected graph with $V=\{0,1,\ldots,n-1\}$. Letting
\[
P_{i,j}=\Pr\left(  Y_{t}=j~~|~~Y_{t-1}=i\right)
\]
we have
\[
P_{i,j}=
\left\{
\begin{array}
[c]{clc}
\frac{1}{|N(i)|} & \text{for }j\in N(i) & \\
0                & \text{otherwise.} &
\end{array}
\right.
\]
Note that $P$ is the $n\times n$ transition probability matrix governing the robber's random walk in $G$ \emph{in the absence of cops}. To account for capture by the cops, define a new state space $\overline{V}=V\cup\left\{n\right\}$, that is, the old state space augmented by the \emph{capture state }$n$. The corresponding $(n+1)\times(n+1)$ transition matrix is
\[
\overline{P}=
\left(
\begin{array}
[c]{cc}
P          & \mathbf{0}\\
\mathbf{0} & 1
\end{array}
\right).
\]
In the absence of cops, the robber performs a standard random walk on $G$ and never enters the capture state; if however he starts in the capture state, he remains there forever: $\overline{P}_{n,n}=1$. In other words, the Markov chain governed by $\overline{P}$ contains two noncommunicating equivalence classes: $\{0,1,\ldots,n-1\}$ and $\{n\}$.

Suppose now that a single cop is located in vertex $x$. We will denote the corresponding transition probability matrix by $\overline{P} \left(x\right)$. Obviously, $\overline{P}\left(x\right)\neq\overline{P}$. The difference is caused by the possibility of capture, which can occur in two ways.

\begin{enumerate}
\item At the $(t-1)$-th round the robber is located at $x$ and, in the first phase of the $t$-th round, the cop moves into $x$. Then the robber is captured, so $\overline{P}_{x,n}\left(x\right)=1$ and $\overline{P}_{x,y}\left(x\right)=0$ for $y \in V$.

\item At the $(t-1)$-th round the robber is located at $y\neq x$ and, in the second phase of the $t$-th round, he moves from $y$ to $x$. Hence the robber is captured with probability $P_{y,x}$. So, for all $y\in V-\left\{x\right\}$,  $\overline{P}_{y,n}\left(x\right)=P_{y,x}$, $\overline{P}_{y,x}\left(x\right)=0$.
\end{enumerate}

We can summarize the above by writing
\[
\overline{P}\left(x\right)=\left(
\begin{array}
[c]{cc}
P\left(x\right)  & \mathbf{p}(x)\\
\mathbf{0}       & 1
\end{array}
\right),
\]
where $P\left(x\right)$ has 0's in the $x$-th row and column and the corresponding probabilities have been moved into the $\mathbf{p}(x)$ vector. For example, letting $G$ be the path with 5 nodes, the matrices $\overline{P}$ and $\overline{P}\left(2\right)$ are: 
{\tiny
\[
\overline{P}=
\left(
\begin{array}
[c]{cccccc}
0 & 1 & 0 & 0 & 0 & 0\\
1/2 & 0 & 1/2 & 0 & 0 & 0\\
0 & 1/2 & 0 & 1/2 & 0 & 0\\
0 & 0 & 1/2 & 0 & 1/2 & 0\\
0 & 0 & 0 & 1 & 0 & 0\\
0 & 0 & 0 & 0 & 0 & 1
\end{array}
\right),
\quad
\overline{P}\left(2\right)=
\left(
\begin{array}
[c]{cccccc}
0 & 1 & 0 & 0 & 0 & 0\\
1/2 & 0 & 0 & 0 & 0 & 1/2\\
0 & 0 & 0 & 0 & 0 & 1\\
0 & 0 & 0 & 0 & 1/2 & 1/2\\
0 & 0 & 0 & 1 & 0 & 0\\
0 & 0 & 0 & 0 & 0 & 1
\end{array}
\right).
\]
} 
Especially for the placement round of the game ($t=0$)\ we need a different matrix, because the robber does not perform a random-walk, but simply chooses an initial position uniformly at random; if he chooses the one already occupied by the cop, then he is captured immediately. Hence, for this round the appropriate transition matrix is $\widehat{P}\left(x\right)$, which is the unit matrix with the one of the $x$-th row moved to the $\left(n+1\right)$-th column.

Let $\pi_{i}(t)=\mathbb{P}(Y_{t}=i)$ for $i\in\overline{V}$ and $t\in\left\{0,1,\ldots,s\right\}$
and $\pi(t)=\left(\pi_{0}(t),\pi_{1}(t), \ldots,\pi_{n}(t)\right)$; also let
$\widehat{\pi}\left(0\right)  =\left(\frac{1}{n},\frac{1}{n},\ldots,\frac{1}{n},0\right)$.
Then, given a strategy $\mathbf{X}=\left(x_{0},x_{1},\ldots,x_{s}\right)$, the above formulation yields 
\[
\pi\left(0\right)=\widehat{\pi}\left(0\right)\widehat{P}\left(x_{0}\right)
\]
and, for $t\in\left\{1,2, \ldots\right\}$,
\[
\pi\left(t\right)=\pi\left(t-1\right)\overline{P}\left(x_{t-1}\right).
\]
This implies that  $\pi\left(t\right)= \widehat{\pi}\left(0\right)\widehat{P}\left(x_{0}\right)\overline{P}\left(x_{1}\right)\overline{P}\left(x_{2}\right)\ldots\overline{P}\left(x_{t}\right)$. To illustrate this, let us continue the example. Suppose a single cop enters the path and follows the strategy $\mathbf{X}=(0,1,2,3,4)$ (start on one end of the path and move to the other one). Then we have

{\tiny 
$\pi\left(0\right)=\widehat{\pi}\left(0\right)\widehat{P}\left(x_{0}\right)=
\left(
\begin{array}
[c]{cccccc}
1/5 & 1/5 & 1/5 & 1/5 & 1/5 & 0
\end{array}
\right)  
\left(
\begin{array}
[c]{cccccc}
0 & 0 & 0 & 0 & 0 & 1\\
0 & 1 & 0 & 0 & 0 & 0\\
0 & 0 & 1 & 0 & 0 & 0\\
0 & 0 & 0 & 1 & 0 & 0\\
0 & 0 & 0 & 0 & 1 & 0\\
0 & 0 & 0 & 0 & 0 & 1
\end{array}
\right)
=\allowbreak
\left(
\begin{array}
[c]{cccccc}
0 & \frac{1}{5} & \frac{1}{5} & \frac{1}{5} & \frac{1}{5} & \frac{1}{5}
\end{array}
\right)$
}

{\tiny 
$\pi\left(1\right)=\pi\left(  0\right)  \overline{P}\left(x_{1}\right)=
\left(
\begin{array}
[c]{cccccc}
0 & \frac{1}{5} & \frac{1}{5} & \frac{1}{5} & \frac{1}{5} & \frac{1}{5}
\end{array}
\right)
\left(
\begin{array}
[c]{cccccc}
0 & 0 & 0 & 0 & 0 & 1\\
0 & 0 & 0 & 0 & 0 & 1\\
0 & 0 & 0 & 1/2 & 0 & 1/2\\
0 & 0 & 1/2 & 0 & 1/2 & 0\\
0 & 0 & 0 & 1 & 0 & 0\\
0 & 0 & 0 & 0 & 0 & 1
\end{array}
\right)
=\allowbreak
\left(
\begin{array}
[c]{cccccc}
0 & 0 & \frac{1}{10} & \frac{3}{10} & \frac{1}{10} & \frac{1}{2}
\end{array}
\right)$
}

{\tiny 
$\pi\left(2\right)=\pi\left(1\right)\overline{P}\left(x_{2}\right)
=\allowbreak
\left(
\begin{array}
[c]{cccccc}
0 & 0 & \frac{1}{10} & \frac{3}{10} & \frac{1}{10} & \frac{1}{2}
\end{array}
\right)  
\left(
\begin{array}
[c]{cccccc}
0 & 1 & 0 & 0 & 0 & 0\\
1/2 & 0 & 0 & 0 & 0 & 1/2\\
0 & 0 & 0 & 0 & 0 & 1\\
0 & 0 & 0 & 0 & 1/2 & 1/2\\
0 & 0 & 0 & 1 & 0 & 0\\
0 & 0 & 0 & 0 & 0 & 1
\end{array}
\right)
=\allowbreak
\left(
\begin{array}
[c]{cccccc}
0 & 0 & 0 & \frac{1}{10} & \frac{3}{20} & \frac{3}{4}
\end{array}
\right)$
}

{\tiny 
$\pi\left(3\right)=\pi\left(2\right)\overline{P}\left(x_{3}\right)
=\allowbreak
\left(
\begin{array}
[c]{cccccc}
0 & 0 & 0 & \frac{1}{10} & \frac{3}{20} & \frac{3}{4}
\end{array}
\right)
\left(
\begin{array}
[c]{cccccc}
0 & 1 & 0 & 0 & 0 & 0\\
1/2 & 0 & 1/2 & 0 & 0 & 0\\
0 & 1/2 & 0 & 0 & 0 & 1/2\\
0 & 0 & 0 & 0 & 0 & 1\\
0 & 0 & 0 & 0 & 0 & 1\\
0 & 0 & 0 & 0 & 0 & 1
\end{array}
\right)
=\allowbreak
\left(
\begin{array}
[c]{cccccc}
0 & 0 & 0 & 0 & 0 & 1
\end{array}
\right)$
}

The elements $\pi_{n}(t)$ give the probabilities $P(X_{t}=n)$ at time $t$, that is, the probabilities of capture in \emph{at most} $t$ steps. The probabilities of capture \emph{exactly} at time $t$ are then given by $\pi_{n}(t)-\pi_{n}(t-1)$.  The expected capture time (conditional on strategy $\mathbf{X}$ being used) is
\[
\mathbb{E}T=\sum_{t=1}^{\infty}t\cdot\left(\pi_{n}(t)-\pi_{n}(t-1)\right).
\]
In the above example we have
\[
\mathbb{E}T=1\cdot\left(\frac{1}{2}-\frac{1}{5}\right)+2\cdot\left(\frac{3}{4}-\frac{1}{2}\right)+3\cdot\left(1-\frac{3}{4}\right)=\frac{31}{20}.
\]

The approach can be generalized to more than one cop, by letting $\mathbf{x}=(x_{1},x_{2},\ldots,x_{k})$ be a configuration of cops and defining $\overline{P}(\mathbf{x})$, $P(\mathbf{x})$ analogously to the one cop case. Given that the cops follow the strategy $\mathbf{X}=(X_{1},X_{2},\ldots,X_{s})$, the transition probabilities of $Y$ satisfy
\[
\mathbb{P}(Y_{t}=j~~|~~Y_{t-1}=i)=P_{ij}(X_{t})
\]
for $t\leq s$. So the robber process is an inhomogeneous Markov chain, with the transitions controlled by the cops' actions. Markov chains of this type are called \emph{Markov Decision Processes} (MDP) or \emph{Controlled Markov Processes},  where the control function is $X_{t}$; it is a (stochastic) control in the sense that it allows us to change the transition probabilities of $Y_{t}$. We can use the MDP formulation to compute $\mathbb{E}T$ for any given strategy $\mathbf{X}$ in reasonable time. Computing the \emph{optimal} strategy is not computationally viable; for example, with $|V|=n$ and $k$ cops there may exist up to $\Theta((n^{k})^{t})$ strategies of length $t$ (and the same number of corresponding $\mathbb{E}T$'s) to evaluate. In the Section~\ref{sec0602} we will present a computationally viable approach to compute the strategy that is arbitrarily close to the optimal one.

MDP's were introduced in the book \cite{MDP01}; book-length treatments are~\cite{BerTsi,laBarriere,MDP03,MDP02}; an online tutorial is~\cite{MDPOnline}. They have been applied to a version of the cops-robber problem in~\cite{Zadeh}.

\subsection{Computing near-optimal strategies and minimum expected capture time\label{sec0602}}

Let us now present and algorithm to compute $F(G)=\frac{\textrm{ct}(G)}{\textrm{dct}(G)}$ with arbitrarily good precision.  Basically this reduces to computing $\textrm{ct}(G)$ and a good approximation of $\textrm{dct}(G)$, which can be done independently. To this end we present two algorithms, both of which have previously appeared in the literature. To improve the presentation we assign a name to each algorithm and make a few notational modifications; also we point out the similarity between the two algorithms (which apparently has not been noticed before).

\begin{enumerate}
\item The \emph{CAAR} (\emph{C}op \emph{A}gainst \emph{A}dversarial \emph{R}obber) algorithm computes $\text{ct}_{x,y}(G)$ for every initial cop/robber configuration $(x,y)$. In addition, CAAR computes the optimal cop and robber play for every $(x,y)$. Capture time $\text{ct}\left(G\right)$ is easily computed from $\text{ct}(G)=\min_{x}\max_{y}\text{ct}_{x,y}(G)$.

\item Similarly, the \emph{CADR}(\emph{C}op \emph{A}gainst \emph{D}runk \emph{R}obber) algorithm computes (an arbitrarily good approximation of) $\text{dct}_{x,y}(G)$ and the (near-)optimal cop play for every $(x,y)$; drunken capture time $\text{dct}(G)$ is computed from $\text{dct}(G)=\min_{x}\frac{\sum_{y}\text{dct}_{x,y}(G)}{n}$.
\end{enumerate}

CAAR was introduced by Hahn and MacGillivray in~\cite{Hahn2006}. We present the algorithm for the case of a single cop (the generalization for more than one cops is straightforward). Slightly changing notation, we will use $C_{x,y}$ to denote the game duration when the cop is located at $x$, the robber at $y$ and it is the cop's turn to move (in other words, $C_{x,y}$ equals $\text{ct}_{x,y}(G)$). Similarly $R_{x,y}$ denotes game duration when it is the robber's turn to move. For both $C_{x,y}$ and $R_{x,y}$ we assume optimal play by both cop and robber. Let us also define
\[
\widehat{V}^{2}=V\times V-\left\{\left(x,x\right):x\in V\right\},
\]
(that is, $V^{2}$ excluding the diagonal) and for all $x\in V$, let $N^{+}\left(x\right)=N\left(x\right)\cup\left\{x\right\}$ be the closed neighbourhood of $x$. CAAR\ consists of the following recursion (for $i=1,2,\ldots$):
\begin{align}
\forall\left(x,y\right)   &  
\in\widehat{V}^{2}:R_{x,y}^{\left(i\right)}=\max_{y^{\prime}\in N^{+}\left(y\right)}C_{x,y^{\prime}}^{\left(i-1\right)},
\label{eq0601}\\
\forall\left(x,y\right)   &  
\in\widehat{V}^{2}:C_{x,y}^{\left(i\right)}=1+\min_{x^{\prime}\in N^{+}\left(x\right)}R_{x^{\prime},y}^{\left(i\right)}.
\label{eq0602}
\end{align}
$C$ and $R$ are initialized with $C_{x,y}^{(0)}=R_{x,y}^{(0)}=\infty$ for all $x\neq y$. We take $C_{x,x}^{(i)}=R_{x,x}^{(i)}=0$ for $i=0,1,2,\ldots$. Then (\ref{eq0601})-(\ref{eq0602}) is essentially equivalent to the version presented by Hahn and MacGillivray in~\cite{Hahn2006}, with just one difference which we will now discuss.

In (\ref{eq0601})-(\ref{eq0602}) the matrix $C$ is computed iteratively: the ($i-1$)-th matrix $C^{(i-1)}$ is stored and used in the $i$-th iteration to compute $C^{\left(i\right)}$. In numerical analysis this is known as a \emph{Jacobi} iteration. It is well known that an alternative approach to computations of this type is the \emph{Gauss-Seidel} iteration. In this iteration a single copy of $C$ is stored and its elements are updated ``in place.'' In~\cite{Hahn2006}, Hahn and MacGillivray present the Jacobi version of CAAR and prove that the algorithm converges (in a finite number of steps) if and only if $c\left(G\right)=1$. Hence CAAR computes the solution of the equations
\begin{align}
\forall\left(x,y\right)   &  
\in\widehat{V}^{2}:R_{x,y}=\max_{y^{\prime}\in N^{+}\left(y\right)}C_{x,y^{\prime}},
\label{eq0603}\\
\forall\left(x,y\right)&
\in\widehat{V}^{2}:C_{x,y}=1+\min_{x^{\prime}\in N^{+}\left(x\right)}R_{x^{\prime},y},
\label{eq0604}\\
\forall x &
\in V:C_{x,x}=R_{x,x}=0.
\label{eq0604a}
\end{align}
The interpretation of the equations is the following. Equation~(\ref{eq0603}) captures the property that from configuration $\left(x,y\right)$ the robber moves so as to maximize the length of the game; similarly, (\ref{eq0604}) describes the cop's goal to minimize the game duration (since the cop moves in the first phase of each round, 1 time unit must be added to $\min R_{x^{\prime},y}$); finally (\ref{eq0604a}) says that the game ends when cop and robber occupy the same vertex.

Extending the CAAR idea to the \emph{drunk }robber game, let us now use $C_{x,y}$ to denote $\text{dct}_{x,y}(G)$. 
In other words $C_{x,y}$ (respectively, $R_{x,y}$) is the \emph{expected} game duration after the cop's (respectively, robber's) move. Recall (see Subsection \ref{sec0601}) that $P_{y,y^{\prime}}(x)$ is the probability of the robber transiting from $y$ to $y^{\prime}$, given that the cop is at $x$; note that $P(x)$ is a \emph{substochastic} matrix. The analog of (\ref{eq0601})-(\ref{eq0602}) is
\begin{align}
\forall\left(  x,y\right)   &  \in\widehat{V}^{2}:R_{x,y}^{\left(  i\right)
}=\sum_{y^{\prime}\in N\left(  y\right)  }P_{y,y^{\prime}}\left(  x\right)
C_{x,y^{\prime}}^{\left(  i-1\right)  },\label{eq0605}\\
\forall\left(  x,y\right)   &  \in\widehat{V}^{2}:C_{x,y}^{\left(  i\right)
}=1+\min_{x^{\prime}\in N^{+}\left(  x\right)  }R_{x^{\prime},y}^{\left(
i\right)  }\label{eq0606}%
\end{align}
and the analog of (\ref{eq0603})-(\ref{eq0604a}) is
\begin{align}
\forall\left(x,y\right)   &  
\in\widehat{V}^{2}:R\left(x,y\right)=\sum_{y^{\prime}\in N\left(y\right)}P_{y,y^{\prime}}\left(x\right)C_{x,y^{\prime}},
\label{eq0607}\\
\forall\left(x,y\right)   &
\in\widehat{V}^{2}:C_{x,y}=1+\min_{x^{\prime}\in N^{+}\left(x\right)}R_{x^{\prime},y}.
\label{eq0608}\\
\forall x &
\in V:C_{x,x}=R_{x,x}=0.
\label{eq0608a}
\end{align}
We want (\ref{eq0605})-(\ref{eq0606}) to converge to the solution of (\ref{eq0607})-(\ref{eq0608a}). We will discuss convergence conditions (and initialization)\ presently.

Actually (\ref{eq0605})-(\ref{eq0606}) can be simplified. Since the drunk robber does not choose his moves, we can eliminate $R_{x,y}^{(i)}$ from (\ref{eq0605})-(\ref{eq0606}) and obtain the CADR\ algorithm recursion:
\begin{equation}
\forall\left(x,y\right)\in\widehat{V}^{2}:C_{x,y}^{\left(i\right)}
=1+\min_{x^{\prime}\in N^{+}\left(x\right)}
\left(\sum_{y^{\prime}\in N\left(y\right)}P_{y,y^{\prime}}\left(x^{\prime}\right) C_{x^{\prime},y^{\prime}}^{\left(i-1\right)}\right).
\label{eq0609}
\end{equation}

We have derived (\ref{eq0609}) from (\ref{eq0605})-(\ref{eq0606}), which we see as an analog of (\ref{eq0601})-(\ref{eq0602}). However, we will now show that (\ref{eq0609}) is a version of the \emph{value iteration} algorithm, introduced and studied in the MDP\ literature~\cite{BerTsi,laBarriere,MDP03,MDP02}. Consider a general MDP process with state space $S$, action space $A$, transition matrix $Q$ and cost matrix $G(a)$ (that is, $G_{s,s^{\prime}}\left(a\right)$ is the cost of transition $s\rightarrow s^{\prime}$ using action $a$). The state space satisfies $S=S_{T}\cup S_{A}$, where $S_{T}$ are the transient states and $S_{A}$ the absorbing ones; it is assumed that transitions after absorption have zero cost: $G_{s,s^{\prime}}\left(a\right)=0$ for $s,s^{\prime}\in S_{A}$. Let $C_s$ be the expected total cost of the process starting from state $s$ and continuing until absorption. Then \cite{MDP03} $C$ satisfies the equations
\begin{equation}
\forall s\in S_{T}:C_{s}=
\min_{a\in A}\left(G_{s,s^{\prime}}\left(a\right)+\sum_{s^{\prime}\in S_{T}}Q_{s,s^{\prime}}\left(a\right)C_{s^{\prime}}\right)
\label{eq0611}
\end{equation}
and the solutions to (\ref{eq0611}) can be obtained by the following value iteration:
\begin{equation}
\forall s\in S_{T}:C_{s}^{\left(i\right)}=
\min_{a\in A}\left(G_{s,s^{\prime}}\left(a\right)+
\sum_{s^{\prime}\in S_{T}}Q_{s,s^{\prime}}\left(a\right)  C_{s^{\prime}}^{\left(i-1\right)}\right).
\label{eq0610}
\end{equation}
To show that (\ref{eq0610}) can be reduced to (\ref{eq0609}) let us take $S_{T}=\widehat{V}^{2}$ and $A=V$; in other words, states $s=(x,y)$ are cop/robber configurations and actions $a=x^{\prime}$ are new cop positions. Regarding move costs: 
(a) before capture every move has unit cost, 
(b) after capture only moves of the form $(x,x) \rightarrow (x,x)$ 
are possible and these have zero cost; in short
\[
G_{\left(x,y\right),\left(x^{\prime},y^{\prime}\right)}\left(x^{\prime}\right) =
\left\{
\begin{array}
[c]{ll}
1 & \text{if and only if }x\neq y\\
0 & \text{otherwise.}%
\end{array}
\right.
\]
Finally,
\[
Q_{\left(x,y\right),\left(x^{\prime},y^{\prime}\right)}\left(a\right)=
\left\{
\begin{array}
[c]{ll}
P_{y,y^{\prime}}\left(x^{\prime}\right)   & \text{if }a=x^{\prime}\in N^{+}\left(x\right)\text{ and }y^{\prime}\in N\left(y\right)  \\
0 & \text{otherwise.}%
\end{array}
\right.
\]
Using the above, it is easy to reduce (\ref{eq0610}) to (\ref{eq0609}).

The convergence of the CADR algorithm has been studied by several authors, in various degrees of generality~\cite{Zadeh,MDP01,MDP02}.  A simple yet strong result, derived in~\cite{Zadeh}, uses the concept of \emph{proper strategy}: a strategy is called proper if it yields finite expected capture time. It is proved in~\cite{Zadeh} that: if a  proper strategy exists for graph $G$, then the Gauss-Seidel version of CADR converges to the true $C$ for arbitrary $C^{(0)}$ provided $C_{x,y}^{(0)}\geq0$ for all $\left(x,y\right)\in\widehat{V}^{2}$. As we have seen in Theorem~\ref{thm:dct_finite}, the cop has a proper strategy for every $G$.  It can be proved that the Jacobi version of CADR also converges under the same conditions. 

Now, $F(G)$ can be computed, easily. For every pair $(x,y)$, one can obtain a desired approximation of $\text{ct}_{x,y}(G)$ and $\text{dct}_{xy}(G)$ by performing CAAR and CADR, respectively. Then
\[
F\left(G\right)=
\frac{\text{ct}\left(G\right)}{\text{dct}\left(G\right)}=
\frac{\min_{x\in V}\max_{y\in V}\text{ct}_{xy}\left(G\right)}{\min_{x\in V}\frac{1}{\left\vert V\right\vert }\sum_{y\in V}\text{dct}_{xy}\left(G\right)}.
\]
Both CAAR and CADR can be generalized for the case of $k$ cops, replacing $x$ by a $k$-tuple $\mathbf{x}=(x_{1},x_{2},\ldots,x_{k})$; however, execution time of both algorithms increases exponentially with $k$, hence the algorithms are computationally viable only for small $k$'s. Also CADR will work for any transition probability matrix $P$, not just for random walks. Hence, if desired, we can compute the cost of drunkenness for any number of cops (not just for $k=c(G)$) and for non-uniform random walks (i.e., discrete time birth-and-death processes) and other kinds of Markovian robbers.

Both CAAR and CADR can easily provide an optimal and near-optimal cop strategy in \emph{feedback} form $U_{x,y}$, that is, the optimal cop move when the cop/robber configuration is $(x,y)$. This is achieved by recording a minimizing $x^{\prime}$ in (\ref{eq0604}) / (\ref{eq0609}). The optimal robber strategy $W_{x,y}$ (for the adversarial robber) can be similarly obtained by CAAR. For every $(x,y)$ configuration we can have more than one optimal moves, but they all yield the same (optimal) game duration.

We have implemented the CAAR and CADR algorithms in the Matlab package \texttt{CopsRobber}, which can be downloaded from~\cite{web-page}.  We have used this package to perform a number of numerical experiments, some of which are presented in the technical report~\cite{KehPraTR}. This report also contains presentation of the algorithms in pseudo-code and a discussion of various computational issues.

\section{The Invisible Robber\label{sec07}}

In this section we present an introductory discussion of the cops and robber game when the robber is \emph{invisible}; in other words, the cops do not know the robber's location unless he is occupying the same vertex as one of the cops. All the other rules of the game remain the same. This version raises several interesting questions, a full study of which will be undertaken in a future paper. 

Since the cops never see the robber until capture, they cannot use feedback strategies. In other words, the cop strategy is determined before the game starts. This does \emph{not} mean that every cop move is predetermined because in certain cases it makes sense for the cops to randomize their moves. Hence capture time will in general be a random variable, even in the case of adversarial robber (who may also benefit from a randomized strategy).

Let us first examine the case of adversarial invisible robber. It is clear that, given enough cops, expected capture time will be finite. This is obviously true for $| V |$ cops, but in fact $c(G)$ cops suffice, as seen by the following theorem.

\begin{theorem}
\label{thm:random_cops} Suppose that $c(G)$ cops perform a random walk on a connected graph $G$, starting from any initial position. The robber, playing perfectly, is trying to avoid being captured. Let random variable $T$ be the capture time. Then,
\[
\mathbb{E }T < \infty.
\]
\end{theorem}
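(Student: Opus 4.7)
The plan is to imitate the proof of Theorem \ref{thm:dct_finite}, replacing the ``stationary cops lure the drunk robber'' trick by ``randomly walking cops eventually execute an optimal strategy against the adversarial robber.'' Set
\[
T^{\ast} \;=\; \max_{x \in V^{c(G)},\, y \in V}\, \mathrm{ct}_{x,y}\bigl(G,\,c(G)\bigr),
\]
which is finite because $c(G)$ cops defeat the adversarial robber. The key observation is that from \emph{any} configuration $(x,y)$ the cops possess a winning strategy $\sigma^{\ast}_{x,y}$ that achieves capture in at most $T^{\ast}$ rounds against any robber play; by the perfect information and Markov structure of the game, $\sigma^{\ast}$ can be chosen to be stationary, so that the move it prescribes in each round is a deterministic function of the current configuration.

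The next step is to bound from below the per-round probability that the cops' independent random walks happen to execute the move prescribed by $\sigma^{\ast}$. Each cop moves to a uniformly chosen element of at most $\Delta(G)+1$ legal choices (neighbours, plus possibly staying in place), so any single prescribed move has probability at least $1/(\Delta(G)+1)$; by independence of the $c(G)$ walks, the joint probability that all cops execute the $\sigma^{\ast}$-move in a given round is at least $p := (\Delta(G)+1)^{-c(G)}$. Iterating across $T^{\ast}$ consecutive rounds, starting from any configuration $(X_t,Y_t)$, the chain rule gives probability at least $\varepsilon := p^{T^{\ast}} > 0$ that the cops' trajectory coincides with $\sigma^{\ast}_{X_t,Y_t}$ throughout the window, and in this event the robber is captured by time $t + T^{\ast}$, no matter how adversarial his moves are.

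The conclusion then follows exactly as in the proof of Theorem \ref{thm:dct_finite}: since from any state the next $T^{\ast}$ rounds produce capture with probability at least $\varepsilon$, and successive windows can be chained via the strong Markov property,
\[
\mathbb{E}\,T \;\le\; T^{\ast} \sum_{i \ge 0}(1-\varepsilon)^{i} \;=\; \frac{T^{\ast}}{\varepsilon} \;<\; \infty.
\]

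The main obstacle I anticipate is ensuring that the lower bound $\varepsilon$ is uniform over all adversarial histories: the robber observes the cops' moves and can adapt, so $\sigma^{\ast}$'s prescribed moves inside the window depend on the robber's future actions. Uniformity is recovered because the cops' random walks are independent of the robber's choices, so at every intermediate round the probability of matching the (then-determined) prescribed move is at least $p$ regardless of which move it is. A minor caveat is that if the random walk model forbids a cop from staying in place, then any ``stay'' prescribed by $\sigma^{\ast}$ must be mimicked by a ``leave and return'' detour, which inflates $T^{\ast}$ by at most a factor of $2$ and shrinks $\varepsilon$ accordingly, but still leaves $\mathbb{E}\,T$ finite.
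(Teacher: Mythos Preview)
Your proof is correct and follows essentially the same approach as the paper's: define $T^{\ast}=\max_{x,y}\mathrm{ct}_{x,y}(G,c(G))$, observe that the randomly walking cops execute the winning strategy from the current configuration over the next $T^{\ast}$ rounds with probability at least $\varepsilon=(1/\Delta)^{c(G)T^{\ast}}$ (the paper uses $\Delta$ rather than $\Delta+1$), and then sum the geometric tail exactly as in Theorem~\ref{thm:dct_finite}. Your additional remarks about uniformity under the robber's adaptive play and about mimicking ``stay'' moves when the walk forbids loops are refinements the paper glosses over, but the core argument is identical.
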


\begin{proof}
Let $G=(V,E)$ be any connected graph, and let $\Delta=\Delta(G)$ be the maximum degree of $G$. Put $k=c(G)$. For any configuration of cops $x\in V^{k}$ and any vertex occupied by the robber $y\in V$, there exists a winning strategy $S_{x,y}$ that guarantees that the robber is caught after at most $t_{x,y}$ rounds. It is clear that cops will follow $S_{x,y}$ with probability at least $(1/\Delta)^{kt_{x,y}}$. Now, let us define
\[
\varepsilon=\min_{x\in V^{k},y\in V}(1/\Delta)^{kt_{x,y}}=(1/\Delta)^{kT_{0}}>0\text{, where }T_{0}=\max_{x\in V^{k},y\in V}t_{x,y}.
\]
This implies that, regardless of the current position of players at time $t$, the probability that the robber will be caught after at most $T_{0}$ further rounds is at least $\varepsilon$. Moreover, corresponding events for times $t,t+T_{0},t+2T_{0},\dots$ are mutually independent. Thus, we get immediately that
\begin{align}
\mathbb{E}T  &  =\sum_{t\geq0}\mathbb{P}(T>t)~~\leq~~\sum_{t\geq0}
\mathbb{P}\left(T> \left\lfloor \frac{t}{T_{0}} \right\rfloor T_{0}\right)\nonumber\label{eq:1}\\
&  =\sum_{i\geq0}T_{0}\mathbb{P}(T>iT_{0})~~\leq~~T_{0}\sum_{i\geq
0}(1-\varepsilon)^{i}~~=~~\frac{T_{0}}{\varepsilon}~~<~~\infty,
\end{align}
and we are done.
\end{proof}

Hence $c(G)$ is the \emph{minimum} number of cops required to capture the adversarial invisible robber in finite expected time, since this task is at least as hard as capturing the adversarial \emph{visible} robber. Of course, generally it will take longer, comparing to the visible robber case, to capture the invisible robber. Let us define $\mathrm{ict}_{x,y}(G,k)$ to be the expected capture time when the initial cops/robber configuration is $(x,y)$ and both the $k$ cops and the robber play optimally; we also define 
\[
\mathrm{ict}(G,k)=\min_{x\in V^{k}} \max_{y\in V}\mathrm{ict}_{x,y}(G,k)
\]
and, finally, $\mathrm{ict}(G)=\mathrm{ict}(G,c(G))$.

We now turn to the \emph{drunk} invisible robber. He chooses his starting vertex uniformly at random and performs a random walk, as before. For a given starting position $x\in V^{k}$ for $k$ cops, there is a strategy that yields the smallest expected capture time $\mathrm{idct}_{x}(G,k)$. Cops have to minimize this by selecting a good starting position:
\[
\mathrm{idct}(G,k)=\min_{x\in V^{k}}\mathrm{idct}_{x}(G,k).
\]
As usual, $\mathrm{idct(G)=idct(G,c(G))}$ but it makes sense to consider any value of $k\geq1$. The proof of the next theorem is exactly the same as Theorem~\ref{thm:dct_finite} and so is omitted.

\begin{theorem}
$\mathrm{idct(G,k)<\infty}$ for any connected graph $G$ and $k\geq1$.
\end{theorem}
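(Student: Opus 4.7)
The plan is to observe that the proof of Theorem~\ref{thm:dct_finite} already works verbatim in the invisible setting, because the suboptimal strategy used there is \emph{open-loop}: it ignores the robber's location entirely. Specifically, I would fix any vertex $v\in V$, place all $k$ cops on $v$, and keep them stationary for all $t\ge 0$. Since this strategy is specified in advance and does not depend on observations of the robber, it is admissible in the invisible-robber game, where the cops must commit to their moves without seeing their opponent. Hence it furnishes a valid (suboptimal) upper bound on $\mathrm{idct}(G,k)$.

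Next I would reproduce the probabilistic estimate. Let $D=D(G)$ and $\Delta=\Delta(G)$. For any current robber vertex $y$, fix a shortest path of length at most $D$ from $y$ to $v$; the probability that the random walk follows this particular path for the next $D$ steps is at least $\varepsilon:=(1/\Delta)^D$. Therefore, regardless of the robber's location at time $t$, he is captured within the next $D$ steps with probability at least $\varepsilon$. Partitioning time into disjoint windows of length $D$, the Markov property makes the corresponding capture events independent across windows, so the tail-sum argument from Theorem~\ref{thm:dct_finite} gives
\[
\mathbb{E}T \le \sum_{t\ge 0}\mathbb{P}\!\left(T>\left\lfloor t/D\right\rfloor D\right) = D\sum_{i\ge 0}\mathbb{P}(T>iD) \le D\sum_{i\ge 0}(1-\varepsilon)^i = D\Delta^D < \infty.
\]
Since $\mathrm{idct}(G,k)$ is defined as the minimum expected capture time over all valid cop strategies, this stationary strategy already witnesses $\mathrm{idct}(G,k)\le D\Delta^D<\infty$.

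There is essentially no obstacle beyond the single sanity check that the original strategy qualifies as a legitimate invisible-robber strategy. That check is immediate: a constant sequence $X_t^i\equiv v$ uses no feedback and hence is permissible under the more restrictive information structure of the invisible game. All other ingredients, namely connectedness of $G$, the uniform lower bound $\varepsilon=(1/\Delta)^D$ on per-window capture probability, the independence of capture events across disjoint length-$D$ windows, and the geometric-tail summation, are unchanged from the visible-robber proof. This is why the authors can legitimately omit the proof and declare it identical.
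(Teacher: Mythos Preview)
Your proposal is correct and matches the paper's approach exactly: the authors state that the proof is identical to that of Theorem~\ref{thm:dct_finite} and omit it, and you have correctly identified the one point that needs checking---namely that the stationary strategy is open-loop and hence admissible in the invisible-robber game. Nothing more is required.
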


Finally, the cost of drunkenness for the invisible robber game is  $F_i(G)=\frac{\mathrm{ict}(G)}{\mathrm{idct}(G)}$. It follows from last theorem that this graph parameter is well defined (that is, finite).

Let us make a few remarks regarding the invisible robber with ``\emph{infinite}'' speed (actually, what we mean by this is an arbitrarily high speed). Let us define the cop number for this case by $c^{\infty}(G)$; it is the minimum number of cops that have a strategy to obtain a finite expected capture time. It is clear that $c(G)\le c^{\infty}(G)\le s(G)$, where $s(G)$ is the \emph{search number} of $G$, that is, the minimum number of cops required to \emph{clean} the graph in the \emph{Graph Search} (GS) game (mentioned in Section~\ref{sec01}). We want to emphasize that the cops and robber game (with invisible, infinite speed robber) is \emph{different} from the GS game and, in particular, there are graphs for which $c^{\infty}(G)< s(G)$. For example, for the $C_3$ cycle,  $s(C_3)$=2 but $c^{\infty}(G)=1$, namely one cop using a \emph{randomized} strategy,  can capture the invisible, adversary, infinite speed robber in $T$ with $\mathbb{E}T=2$. Similarly, one cop on $K_{1,3}$, the star with 3 rays, can achieve $\mathbb{E}T=11/3$. Many other examples can be found. The main reason for the discrepancy between  $c^{\infty}(G)$ and $s(G)$ is that, in the GS game, the fugitive is assumed \emph{omniscient} and (under one interpretation) this means he knows \emph{in advance} all the cop moves (until the end of the game). In the cops and robber family of games, on the other hand, omniscience is not assumed, either explicitly or implicitly. We can summarize in one phrase: \emph{clearing is harder than capturing} even an infinite speed robber. We intend to further explore this issue, as well the computation of optimal strategies for cops chasing an invisible adversarial robber in a future publication.

We will finish this section with the computation of the cost of drunkenness for two examples (path and cycle) involving an  invisible (unit speed) robber. In both cases the computation is possible because the optimal strategy (for both the cops and the adversarial robber) is ``obvious.'' Our examples are similar to the ones we have considered for the visible robber and proofs are omitted, since they are almost identical to those of  Section~\ref{sec03}. 

Consider the path $P_{n}$ again, with a single cop and an invisible robber. It is clear that the best strategy for the cop (regardless of whether he is playing against a perfect robber or a drunk one) is to start from one end of the path (say, from vertex $0$) and move along the path until the robber is captured. We have $\mathrm{ict}(P_{n})=n-1$. When cops are playing agains a drunk robber, the expected capture time is roughly two times smaller.

\begin{theorem}
\[
\frac{n}{2} \left(  1 - O \left(  \frac{\log n}{n} \right)  \right)  ~~\le~~
\mathrm{idct}(P_{n}) ~~\le~~ \frac{n-1}{2}.
\]
In particular, $\mathrm{idct}(P_{n}) = (1+o(1)) n/2$ and the cost of drunkenness is
\[
F_i(G)=\frac{\mathrm{ict}(P_{n})}{\mathrm{idct}(P_{n})} = 2+o(1).
\]
\end{theorem}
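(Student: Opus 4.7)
The plan is to reduce the computation to that of Theorem~\ref{thm:dst_pn}, exploiting the fact that against an invisible robber the cop must commit to a preplanned walk. I would first fix the candidate strategy $X_t = \min(t, n-1)$ (cop starts at vertex $0$ and walks toward vertex $n-1$). Under this strategy the relative-position process $Z_t = Y_t - X_t$ is exactly the Markov chain studied in Theorem~\ref{thm:dst_pn}, so both bounds
\[
\tfrac{n}{2}\bigl(1 - O(\log n / n)\bigr) \le \mathbb{E}T_n \le \tfrac{n-1}{2}
\]
transfer verbatim to the expected capture time achieved by the sweep strategy. In particular the upper bound $\mathrm{idct}(P_n) \le (n-1)/2$ is immediate.

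To conclude for $\mathrm{idct}(P_n)$ I would argue that no alternative preplanned cop strategy does asymptotically better. The ingredients are the concentration from Lemma~\ref{lem:walk} together with a covering/pigeonhole bookkeeping on the cop's trajectory. With probability $1 - o(1)$ the drunk robber stays within $\epsilon := O(\sqrt{n\log n})$ of $Y_0$ throughout the first $n$ rounds, so capture requires the cop to visit some vertex within distance $\epsilon$ of $Y_0$. Since the cop visits at most $t+1$ distinct vertices by time $t$, the set $\{v : \tau(v) \le t\}$ (where $\tau(v)$ is the first time the cop is $\epsilon$-close to $v$) has size at most $t + 1 + 2\epsilon$. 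Averaging over uniform $Y_0$,
\[
\mathrm{idct}(P_n) \;\ge\; \mathbb{E}[\tau(Y_0)] - o(n) \;\ge\; \frac{1}{n}\sum_{t=0}^{n-2\epsilon-2}(n-t-1-2\epsilon) \;=\; \frac{n}{2} - O(\sqrt{n\log n}),
\]
and the sweep strategy (which realises $\{\tau(v)\} = \{0,1,\dots,n-1\}$) saturates this counting bound. Combined with the upper bound, $\mathrm{idct}(P_n) = (1+o(1))n/2$.

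Finally, the adversarial invisible robber can remain near the endpoint opposite the cop's sweep, forcing the cop to traverse essentially the entire path, which gives $\mathrm{ict}(P_n) = n-1$. Dividing yields
\[
F_i(P_n) \;=\; \frac{\mathrm{ict}(P_n)}{\mathrm{idct}(P_n)} \;=\; \frac{n-1}{(1+o(1))\,n/2} \;=\; 2 + o(1).
\]

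The main technical obstacle is matching the tight $O(\log n)$ slack in the lower bound claim across all strategies: the crude counting argument above only produces $O(\sqrt{n\log n})$ error. Closing this gap requires re-running the case split from the proof of Theorem~\ref{thm:dst_pn} (separating starting positions $Y_0$ far from versus within $O(\sqrt{n\log n})$ of the cop's terminal position) inside the covering argument; once the sweep is certified as asymptotically optimal, the refined Chernoff bookkeeping of Theorem~\ref{thm:dst_pn} applies verbatim to upgrade the error to the advertised $\frac{n}{2}\bigl(1 - O(\log n / n)\bigr)$.
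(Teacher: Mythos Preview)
Your approach matches the paper's in spirit: the paper simply asserts that the sweep from one end is ``clearly'' the optimal cop strategy against an invisible robber (drunk or adversarial), and then both inequalities are inherited verbatim from Theorem~\ref{thm:dst_pn}; no separate lower-bound argument over all strategies is supplied. So your upper bound and your final computation of $F_i(P_n)$ are exactly what the paper intends.

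Where you diverge is in actually trying to justify the lower bound across \emph{all} preplanned cop walks via the covering/pigeonhole argument. That is more than the paper does, and the argument you sketch is sound for establishing $\mathrm{idct}(P_n)\ge n/2 - O(\sqrt{n\log n})$. However, your proposed upgrade to the sharper $n/2 - O(\log n)$ is not quite right as stated: showing that the sweep is \emph{asymptotically} optimal does not by itself let you invoke Theorem~\ref{thm:dst_pn}'s lower bound, since that bound is proved only for the sweep and a lower bound on $\mathrm{idct}(P_n)$ must hold for every strategy. To get the advertised error you would either need to prove that the sweep is \emph{exactly} optimal (which is what the paper takes for granted), or rerun the Chernoff case-split of Theorem~\ref{thm:dst_pn} for an arbitrary walk---and the latter is not ``verbatim'', since the bookkeeping in that proof uses the monotone, endpoint-reaching structure of the sweep in an essential way (the event ``robber reaches the far end'' has no clean analogue for a general walk). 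In short: your proof is at least as rigorous as the paper's, but the tight $O(\log n)$ slack really rests on the unproved exact-optimality claim, both in the paper and in your write-up.
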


Let us now play the game with two cops and an invisible robber on the cycle $C_{n}$ for $n\geq4$. It is not difficult to see that $s(C_{n})=2=c(C_{n})$. The best cop strategy is to start on vertices $1$ and $n$; the cop occupying vertex $1$ will move toward higher values, the other one will move in the opposite direction. The game ends after $\mathrm{ict}(C_{n})=\lfloor(n-1)/2\rfloor$ steps. When cops are playing against a drunk robber, the expected capture time is roughly two times smaller.

\begin{theorem}
We have
\[
\frac{n}{4}\left(  1-O\left(  \frac{\log n}{n}\right)  \right)  \leq
\mathrm{idct}(C_{n})\leq\frac{n-1}{4}.
\]
In particular, $\mathrm{idct}(C_{n})=(1+o(1))n/4$ and the cost of drunkenness is
\[
F_i(G)=\frac{\mathrm{ict}(C_{n})}{\mathrm{idct}(C_{n})}=2+o(1).
\]
\end{theorem}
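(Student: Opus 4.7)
The plan is to analyze the deterministic sweep strategy described right before the statement: place cops at the adjacent vertices $1$ and $n$, with cop $A$ moving $1 \to 2 \to \cdots$ and cop $B$ moving $n \to n-1 \to \cdots$. Since this strategy is non-adaptive it is automatically admissible against an invisible robber. I would introduce two signed ``gaps'' $d_1(t)$ and $d_2(t)$ measuring how far the robber is from cop $A$ (along the clockwise arc) and from cop $B$ (along the counter-clockwise arc) respectively. These satisfy $d_1(t) + d_2(t) = n - 1 - 2t$ and, because the cops step deterministically while the robber flips a fair coin, exactly one of the two gaps decreases by $2$ at each round, the other being unchanged. The capture time is
\[
T = \min\{t \ge 0 \colon d_1(t) \le 0 \text{ or } d_2(t) \le 0\},
\]
matching the reduction used in Theorem~\ref{thm:dct_cn}. (As in Section~\ref{sec03} one allows players to ``pass each other'' in the analysis, which can only inflate $T$.)

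For the upper bound I would run the same argument as in Theorem~\ref{thm:dst_pn} but applied separately to each gap. Conditional on the robber starting at $Y_0 = v$, one has $d_1(0) = v-1$ and $d_2(0) = n-v$, and because the marginal law of $d_1$ is that of the chase on an infinite ray of Theorem~\ref{thm:dst_pn}, a direct coupling with the chain $Z'_t$ there gives $\mathbb{E}[T \mid Y_0=v] \le \min(v-1,\, n-v)$. Averaging this over the uniform distribution on $\{1,\ldots,n\}$ yields $\mathrm{idct}(C_n) \le \tfrac{1}{n}\sum_{v=1}^{n}\min(v-1, n-v) \le (n-1)/4$.

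For the lower bound I would invoke Lemma~\ref{lem:walk}. Run the random walk of the robber for at most $n$ steps; by the lemma, with probability $1 - o(n^{-1})$ its displacement from $Y_0$ stays bounded by $c\sqrt{n\log n}$ throughout. Conditional on this event, a robber starting at distance $k$ from the nearer cop survives at least $k - c\sqrt{n\log n}$ rounds because neither $d_1$ nor $d_2$ can reach $0$ sooner. Splitting the sum over $v$ into a ``bulk'' contribution $k \le (n-1)/2 - c\sqrt{n\log n}$ and a $O(\sqrt{n\log n})$-wide boundary where one uses the trivial lower bound $T \ge 0$, and applying the failure probability $2n^{1-c^2/4}$ as a multiplicative $(1-o(1))$ factor, one recovers
\[
\mathrm{idct}(C_n) \ge \frac{n}{4}\Bigl(1 - O\bigl(\tfrac{\log n}{n}\bigr)\Bigr),
\]
exactly as in Theorems~\ref{thm:dst_pn} and~\ref{thm:dct_cn}.

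The main technical obstacle is verifying that the single-gap coupling remains valid despite the correlation between $d_1$ and $d_2$: each robber step changes exactly one of them, so the marginal distributions are the right ones, but any bound using the \emph{minimum} of the two hitting times must be handled carefully. For the upper bound this is harmless (the minimum is only smaller), and for the lower bound it is circumvented by observing that ``the robber stays within $o(n)$ of $Y_0$'' already precludes either gap from closing prematurely; hence no joint argument is required and the Chernoff-type concentration from Lemma~\ref{lem:walk} suffices. The remaining parity and ``pass-through'' subtleties can be absorbed into the $O(\sqrt{n \log n})$ error term as in Section~\ref{sec03}.
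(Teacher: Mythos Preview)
Your proposal is correct and follows essentially the same approach as the paper. The paper omits the proof, stating it is ``almost identical to those of Section~\ref{sec03}''; your reduction to the two gap variables $d_1,d_2$, the upper bound via coupling each gap with the $Z'$ chain of Theorem~\ref{thm:dst_pn}, and the lower bound via the concentration of Lemma~\ref{lem:walk} are precisely that analysis, and your remarks on why the correlation between $d_1$ and $d_2$ is harmless are accurate.
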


\section{Conclusion\label{sec08}}

Most of the results in the paper pertain to the case of a visible (adversarial / drunk) robber, pursued by $k=c(G)$ cops. The cases of arbitrary $k$ and invisible robber have been briefly touched. We conclude the current paper by listing additional questions regarding the cost of drunkennes. We begin by listing several questions related to the visible robber.

\begin{enumerate}
\item Our analysis can be expanded to strategies which use an arbitrary number of cops. As shown in Theorem~\ref{thm:dct_finite}, even a single cop can catch a drunk robber in finite expected time. Hence, for a given $G$ we can study $\mathrm{dct}(G,k)$ as a function of $k$. Obviously this is a decreasing function; what more can be said about it? As a first step in this direction, the numerical approach of Section~\ref{sec06} can be used to explore the properties of $\mathrm{dct}(G,k)$ for a given graph $G$.

\item Let us define $\mathrm{dct}(G,\mathbf{X})$ to be the expected capture time in graph $G$ using strategy $\mathbf{X}$. It is no longer assumed that $\mathbf{X}$ is an optimal strategy. Under what conditions on $\mathbf{X}$ and/or $G$ will $\mathrm{dct}(G,\mathbf{X})$ be finite? Can we use the approach of Section~\ref{sec06} to obtain non-trivial bounds on $\mathrm{dct}(G,\mathbf{X})$?

\item A related question is whether (for a specific $G$ and either optimal or general strategies) expected capture time can be connected to some graph parameter such as treewidth, pathwidth etc.

\item How robust are our results to slight (natural) modifications of the cops/robber game rules? For example, would the cost of drunkenness change if we allowed the robber to loop into its current location (that is, to perform lazy random walk)? What about a ``general'' random walk (that is, with nonuniform transition probabilities). What about \emph{directed} graphs? Finally, does the situation change significantly if the cops and the robber move simultaneously rather than the cops moving first? The algorithm of Section~\ref{sec06} can be easily modified to handle these cases and numerical experiments may be useful for an initial exploration.
\end{enumerate}

One can try to obtain similar results for the \emph{invisible} robber. In Section~\ref{sec07} we showed how our approach can be extended (at least for certain families of graphs) to this case. In the examples we examined (paths, cycles) the optimal cop strategy is obvious. For general graphs, finding the search strategy optimal for the invisible (adversarial / drunk) robber will be more complicated. Is there a (computationally viable, perhaps approximate) algorithm to achieve this?

Finally, let us note that all of the above analyses adopt the cops' point of view. It will be interesting to study the cost of drunkenness for the cops. In other worlds, assuming an adversarial evader and $k$ drunk cops, can we place bounds on the increase of expected capture time as compared to the case of adversarial cops? Theorem~\ref{thm:random_cops} may be used as a starting point to achieve this goal.

\end{document}